\pdfoutput=1
\newcommand*{\EXTENDED}{}%

\PassOptionsToPackage{table}{xcolor}
\documentclass[sigplan,screen, authorversion]{acmart}
\settopmatter{printfolios=true, printacmref=true}
\setcopyright{acmcopyright}
\acmConference[PLDI'18]{the 39th ACM SIGPLAN Conference on Programming Language Design and Implementation}{18--22 June 2018}{Philadelphia, PA, USA}

\begin{CCSXML}
<ccs2012>
<concept>
<concept_id>10003752.10003753.10003761.10003762</concept_id>
<concept_desc>Theory of computation~Parallel computing models</concept_desc>
<concept_significance>500</concept_significance>
</concept>
<concept>
<concept_id>10003752.10010124.10010131</concept_id>
<concept_desc>Theory of computation~Program semantics</concept_desc>
<concept_significance>500</concept_significance>
</concept>
</ccs2012>
\end{CCSXML}

\ccsdesc[500]{Theory of computation~Parallel computing models}
\ccsdesc[500]{Theory of computation~Program semantics}

\keywords{Shared Memory Concurrency, Weak Memory, Transactional Memory, Program Synthesis}

\usepackage{booktabs}
\usepackage{subcaption}
\captionsetup[figure]{skip=3pt}
\captionsetup[table]{skip=3pt}

\usepackage{tikz}
\usetikzlibrary{matrix,fit,calc,positioning,backgrounds}
\usetikzlibrary{arrows,patterns,decorations}
\usetikzlibrary{decorations.pathreplacing,calligraphy}
\usetikzlibrary{tikzmark}

\usepackage{pgfplots}

\usepackage{enumitem}

\usepackage{framed}
\definecolor{shadecolor}{HTML}{EEEEEE}
\renewcommand\FrameSep{1.5mm}

\usepackage{balance}

\usepackage{array}
\newcolumntype{L}{>{$}l<{$}}
\newcolumntype{C}{>{$}c<{$}}
\newcolumntype{R}{>{$}r<{$}}
\newcolumntype{P}{>{$}p{\linewidth}<{$}}

\usepackage{multirow}

\newcommand\stack[2][l]{\begin{array}[t]{@{}#1@{}}#2\end{array}}

\bibliographystyle{ACM-Reference-Format}
\citestyle{acmnumeric}


\catcode`"=\active
\def"#1"{\texttt{#1}}

\tikzset{event/.style={draw=none, inner sep=0.4mm}}
\newcommand\evtlbl[1]{\mbox{#1:~}}
\newcommand\evW[3]{W\def\temp{#1}\ifx\temp\empty\else,#1\fi\,{#2}}
\newcommand\evR[3]{R\def\temp{#1}\ifx\temp\empty\else,#1\fi\,{#2}}
\newcommand\evL{L}
\newcommand\evU{U}
\newcommand\evLt{L^{\rm t}}
\newcommand\evUt{U^{\rm t}}

\definecolor{colorco}{HTML}{3366CC}
\definecolor{colorrf}{HTML}{FF0000}
\definecolor{colorfr}{HTML}{33BB33}
\definecolor{colorhb}{HTML}{660000}
\definecolor{colortx}{HTML}{BB33BB}
\definecolor{colorlock}{HTML}{A37A11}
\definecolor{colorcomment}{HTML}{4AAD14}

\tikzset{
  edgeco/.style={colorco,-latex},
  edgerf/.style={colorrf,-latex},
  edgefr/.style={colorfr,dashed,-latex},
  edgepo/.style={black, -latex},
  edgepi/.style={black, dotted,->},
  stxn/.style={draw, colortx, inner sep=1mm, rounded corners},
  ftxn/.style={draw, dashed, colortx, inner sep=1mm, rounded corners},
  scri/.style={draw, double, colorlock, inner sep=1mm},
}

\newcommand\blacknum[2][0]{%
\def\myrowsep{0.35}%
\def\myradius{0.75em}%
\begin{tikzpicture}[baseline=-0.3em,yscale=-1]%
\draw[black!25, line width=\myradius, cap=round] (0,0) to (0,0);
\draw[overlay, black!25, line width=\myradius, cap=round] (0,0) to (0,\myrowsep*#1);
\node[inner sep=0pt, outer sep=0pt](a) at (0,0) {%
\makebox[0pt][c]{\textcolor{black}{\tiny\sf\bfseries #2}}
};%
\end{tikzpicture}%
}

\pgfdeclaredecoration{simple line}{initial}{
\state{initial}[width=\pgfdecoratedpathlength-1sp]{%
  \pgfmoveto{\pgfpointorigin}}
\state{final}{\pgflineto{\pgfpointorigin}}
}
\tikzset{
   shift left/.style={%
     decorate,decoration={simple line,raise=#1}},
   shift right/.style={%
     decorate,decoration={simple line,raise=-1*#1}},
}

\newcommand\dashboxed[2][0mm] {\begin{tikzpicture}[baseline=(a.base)]
\node[anchor=base, fill=colortx!15, inner sep=0.7mm, rounded corners](a)
{$\vphantom{(}#2\hspace*{#1}$};
\end{tikzpicture}\hspace*{-#1}}

\newcommand\modelcomment[1]{\text{\color{black!50}\itshape (#1)}}

\newcommand\Memalloy{\textrm{Memalloy}}
\newcommand\Litmus{\textrm{Litmus}}
\newcommand\MemSynth{\textrm{MemSynth}}
\newcommand\Diy{\textrm{Diy}}

\newcommand\cat{\texttt{.cat}}

\newcommand\ax[1]{\textsc{#1}}
\newcommand\ltest[1]{\textbf{#1}}

\newcommand\mhyphen{\textrm{-}}

\usepackage{pifont}
\newcommand\cmark{\ding{51}}%
\newcommand\xmark{\ding{55}}%
\newcommand\redtick{\textcolor{red!80!black}{\cmark}}
\newcommand\greencross{\textcolor{green!80!black}{\xmark}}
\usepackage{marvosym}
\newcommand\timeout{\Clocklogo}

\theoremstyle{acmdefinition}
\newtheorem{theorem}{Theorem}[section]
\newtheorem{remark}[theorem]{Remark}

\newenvironment{BoxedExample}[1][]{%
\begin{oframed}%
\vspace*{-1.5mm}%
\begin{example}[#1]%
}{%
\end{example}%
\vspace*{-1.5mm}%
\end{oframed}%
}

\newenvironment{Remark}[1][]{%
\begin{snugshade*}%
\begin{remark}[#1]%

}{%
\end{remark}%
\end{snugshade*}%
}

\begin{document}

\title{The Semantics of Transactions and Weak Memory \\ in x86, Power,
ARM, and C++}
\renewcommand\shorttitle{The Semantics of Transactions and Weak Memory}

\setcopyright{rightsretained}
\acmPrice{}
\acmDOI{10.1145/3192366.3192373}
\acmYear{2018}
\copyrightyear{2018}
\acmISBN{978-1-4503-5698-5/18/06}
\acmConference[PLDI'18]{39th ACM SIGPLAN Conference on Programming Language Design and Implementation}{June 18--22, 2018}{Philadelphia, PA, USA}

\setlength\leftmargini{\parindent}
\addtolength\leftmargini{2\labelsep}

\author{Nathan Chong}
\affiliation{\institution{ARM Ltd.} \country{United Kingdom}}
\author{Tyler Sorensen}
\affiliation{\institution{Imperial College London} \country{United Kingdom}}
\author{John Wickerson}
\affiliation{\institution{Imperial College London} \country{United Kingdom}}

\newcommand\TODO[1]{\textcolor{red}{#1}}
\newcommand\NCComment[1]{\textcolor{blue!80!black}{{\bf [[\ref{?}NC:} #1{\bf
]]}}}
\newcommand\TSComment[1]{\textcolor{red!70!black}{{\bf [[\ref{?}TS:} #1{\bf
]]}}}
\newcommand\JWComment[1]{\textcolor{green!70!black}{{\bf [[\ref{?}JW:} #1{\bf
]]}}}

\newcommand\totalNumberOfLitmusTests{\ref{?}}

\newcommand{\isabelleqed}{%
\begin{tikzpicture}[x=0.8mm, y=0.8mm, baseline=-0.3mm, line join=round]
\begin{scope}[yslant=-0.5]
  \draw (0,0) rectangle +(1,1);
  \draw (2,1) rectangle +(1,1);
  \draw (1,2) rectangle +(1,1);
\end{scope}
\begin{scope}[yslant=0.5]
  \filldraw (1,-1) rectangle +(1,1);
  \filldraw (3,-2) rectangle +(1,1);
  \filldraw (2,0) rectangle +(1,1);
\end{scope}
\begin{scope}[yslant=0.5,xslant=-1]
  \draw (1,0) rectangle +(1,1);
  \draw (2,-1) rectangle +(1,1);
  \draw (3,1) rectangle +(1,1);
\end{scope}
\end{tikzpicture}%
}

\begin{abstract}
Weak memory models provide a complex, system-centric semantics for concurrent programs, while transactional memory (TM) provides a simpler, programmer-centric semantics. 
Both have been studied in detail, but their \emph{combined} semantics is not well understood.  
This is problematic because such widely-used architectures and languages as x86, Power, and C++ all support TM, and all have weak memory models.

Our work aims to clarify the interplay between weak memory and TM by extending existing axiomatic weak memory models (x86, Power, ARMv8, and C++) with new rules for TM. 
Our formal models are backed by automated tooling that enables (1)~the synthesis of tests for validating our models against existing implementations and (2)~the model-checking of TM-related transformations, such as lock elision and compiling C++ transactions to hardware.
A key finding is that a proposed TM extension to ARMv8 currently being considered within ARM Research is incompatible with lock elision without sacrificing portability or performance.
\end{abstract}

\maketitle

\newcommand\po{\mathit{po}}
\newcommand\addr{\mathit{addr}}
\newcommand\ctrl{\mathit{ctrl}}
\newcommand\data{\mathit{data}}
\newcommand\rf{\mathit{rf}}
\newcommand\co{\mathit{co}}
\newcommand\fr{\mathit{fr}}
\newcommand\rmw{\mathit{rmw}}

\newcommand\Rel{\mathit{Rel}}
\newcommand\Acq{\mathit{Acq}}

\section{Introduction}

\emph{Transactional memory}~\cite{herlihy+93} (TM) is a concurrent programming abstraction that promises scalable performance without programmer pain. 
The programmer gathers instructions into \emph{transactions}, and the system guarantees that each appears to be performed entirely and instantaneously, or not at all.
To achieve this, a typical TM system tracks each transaction's memory accesses, and if it detects a conflict (i.e., another thread concurrently accessing the same location, at least one access being a write), resolves it by aborting the transaction and rolling back its changes.

\subsection{Motivating Example: Lock Elision in ARMv8}
\label{sec:intro:hlebug}

One important use-case of TM is \emph{lock elision}~\cite{rajwar+01, dice+09}, in which the lock/unlock methods of a mutex are skipped and the critical region (CR) is instead executed speculatively inside a transaction. 
If two CRs do not conflict, this method allows them to be executed simultaneously, rather than serially. 
If a conflict is detected, the transaction is rolled back and the system resorts to acquiring the mutex as usual.

Lock elision may not apply to all CRs, so an implementation must ensure mutual exclusion between transactional and non-transactional CRs.
This is typically done by starting each transactional CR with a read of the lock variable (and self-aborting if it is taken)~\cite[\S16.2.1]{intel17}.
If the mutex is subsequently acquired by a non-transactional CR then the TM system will detect a conflict on the lock variable and abort the transactional CR.

%
%

Thus, reasoning about lock elision requires a concurrency model that accounts for both modes, transactional and non-transactional. 
In particular, systems with memory models weaker than \emph{sequential consistency} (SC)~\cite{lamport79} must ensure that the non-transactional lock/unlock methods synchronise sufficiently with transactions to provide mutual exclusion.

In their seminal paper introducing lock elision, \citeauthor{rajwar+01} argued that ``correctness is guaranteed without any dependence on memory ordering''~\cite[\S9]{rajwar+01}.
In fact, by drawing on a decade of weak memory formalisations~\cite{alglave+14, flur+16, pulte+17} and by extending state-of-the-art tools~\cite{wickerson+17, alglave+11a, lustig+17}, we show it is straightforward to contradict this claim \emph{automatically}.
%
%
%
%
%
%
%

\begin{BoxedExample}[Lock elision is unsound under ARMv8]
\label{ex:hle}

\renewcommand\tabcolsep{0.4mm}
\renewcommand\arraystretch{0.9}

\newcommand\xrightbrace[2][1]{%
\def\mylineheight{0.35}%
\raisebox{2.1mm}{%
\smash{%
\begin{tikzpicture}[baseline=(top)]
\coordinate (top) at (0,#1*\mylineheight-0.06);
\coordinate (bottom) at (0,0);
\draw[colorcomment, pen colour={colorcomment}, decoration={calligraphic brace,amplitude=2.1pt}, decorate, line width=1pt] 
  (top) to node[auto, inner sep=0] {~~\begin{tabular}{l}#2\end{tabular}} (bottom);
\end{tikzpicture}}}}

\newcommand\lc[1]{\textcolor{colorlock}{#1}}

Consider the program below, in which two threads use CRs to update a shared location $x$.
\vspace*{1.5mm}
\begin{center}
\small
\begin{tabular}{@{}ll@{\hspace{1mm}}||@{\hspace{1mm}}ll@{}}
\hline
\multicolumn{4}{c}{Initially: $"[X0]"=x=0$}                     \\
\hline
\lc{"lock()"}       &            & \lc{"lock()"}     &          \\
    "LDR W5,[X0]"   & \xrightbrace[3]{$x \leftarrow x+2$} 
                                 &     "MOV W7,\#1"  & 
                              \xrightbrace[2]{$x \leftarrow 1$} \\
    "ADD W5,W5,\#2" &            &     "STR W7,[X0]" &          \\
    "STR W5,[X0]"   &            & \lc{"unlock()"}   &          \\
\lc{"unlock()"}     &                                           \\
\hline
\multicolumn{4}{c}{Test: $x=2$}                                 \\
\hline
\end{tabular}
\end{center}
It must not terminate with $x=2$, for this would violate mutual exclusion.
Now, let us instantiate the lock/unlock calls with two possible implementations of those methods.
\begin{center}
\vspace*{-1.5mm}
\small
\begin{tabular}{@{}lll@{\hspace{1mm}}||@{\hspace{1mm}}lll@{}}
\hline
\multicolumn{6}{c}{Initially: $"[X0]"=x=0$, $"[X1]"=m=0$}    \\
\hline
\blacknum[2]{1}& \lc{"Loop:"}        & 
 \xrightbrace[6]{atomically \\ update $m$ \\ from 0 \\ to 1} & 
\blacknum[7]{3}& \lc{"TXBEGIN"}   & 
 \xrightbrace{begin txn}                                     \\
               & \lc{"LDAXR W2,[X1]"}&                       & 
               & \lc{"LDR W6,[X1]"}  & 
 \xrightbrace[4]{load $m$ \\ and abort \\ if non-\\ zero}        \\
               & \lc{"CBNZ W2,Loop"} &                       & 
               & \lc{"CBZ W6,L1"}    &                       \\
\blacknum[2]{4}& \lc{"MOV W3,\#1"}   &                       & 
               & \lc{"TXABORT"}      &                       \\
               & \lc{"STXR W4,W3,[X1]"} &                    & 
               & \lc{"L1:"}          &                       \\
               & \lc{"CBNZ W4,Loop"} &                       & 
               &    "MOV W7,\#1"     & 
 \xrightbrace[2]{$x \leftarrow 1$}                           \\
\blacknum{2}   &    "LDR W5,[X0]"    & 
 \xrightbrace[3]{$x \leftarrow x+2$}                         & 
               &    "STR W7,[X0]"    &                       \\
\blacknum[2]{5}&    "ADD W5,W5,\#2"  &                       & 
               &\lc{"TXEND"}         & 
\xrightbrace{end txn}                                        \\
               &    "STR W5,[X0]"    &                       & 
               &                     &                       \\
               &\lc{"STLR WZR,[X1]"} & 
\xrightbrace{$m \leftarrow 0$}                               &      
                                     &                       \\
\hline
\multicolumn{6}{c}{Test: $x=2$}                              \\
\hline
\end{tabular}
\end{center}
\vspace*{1.5mm}
The left thread executes its CR non-transactionally, using the recommended ARMv8 spinlock~\cite[K9.3]{arm17}, while the right thread uses lock elision (with unofficial but representative TM instructions).
This program \emph{can} terminate with $x=2$, thus witnessing the unsoundness of lock elision, as follows:

{
\renewcommand\labelenumi{\blacknum{\theenumi}}
\begin{enumerate}[leftmargin=*]
\item The left thread reads the lock variable $m$ as $0$ (free). 
"LDAXR" indicates an \emph{acquire} load, which means that the read cannot be reordered with any later event in program-order.

\item The left thread reads $x$ as $0$. 
This load can execute speculatively because ARMv8 does not require that the earlier store-exclusive ("STXR") completes first~\cite{pulte+17}.

\item The right thread starts a transaction, sees the lock is still free, updates $x$ to $1$, and commits its transaction.

\item The left thread updates $m$ to $1$ (taken). 
This is a store-exclusive ("STXR")~\cite{jensen+87}, so it only succeeds if $m$ has not been updated since the last load-exclusive ("LDAXR"). 
It does succeed, because the right thread only \emph{reads} $m$.

\item Finally, the left thread updates $x$ to $2$, and $m$ to $0$. "STLR" is a \emph{release} store, which means that the update to $m$ cannot be reordered with any earlier event in program-order.
\end{enumerate}
}

\end{BoxedExample}

The crux of our counterexample is that a (non-transaction\-al) CR can start executing after the lock has been observed to be free, but before it has actually been taken. 
Importantly, this relaxation is safe if all CRs are mutex-protected (i.e., the spinlock \emph{in isolation} is correct), since every lock acquisition involves writing to the lock variable and at most one store-exclusive can succeed. 
Rather, the counterexample only arises when this relaxation is \emph{combined} with any reasonable TM extension to ARMv8.
This includes a proposed extension currently being considered within ARM Research.

Furthermore, there appears to be no easy fix.
Re-implemen\-ting the spinlock by appending a "DMB" fence to the "lock()" implementation would prevent the problematic reordering, but would also inhibit compatibility with code that uses the ARM-recommended spinlock, and may decrease performance when lock elision is not in use.
Otherwise, if software portability is essential, transactional CRs could be made to \emph{write} to the lock variable (rather than just read it), but this would induce serialisation, and thus nullify the potential speedup from lock elision.

\subsection{Our Work}

In this paper, we use formalisation to tame the interaction between TM and weak memory.
Specifically, we propose axiomatic models for how transactions behave in x86~\cite{intel17}, Power~\cite{power30}, ARMv8~\cite{arm17}, and C++~\cite{c++tm15}.
As well as the lock elision issue already explained, our formalisations revealed:
\begin{itemize}
\item an ambiguity in the specification of Power TM (\S\ref{sec:x86_power:adding_txns}), 
\item a bug in a register-transfer level (RTL) prototype implementation of ARMv8 TM (\S\ref{sec:arm:testing}), 
\item a simplification to the C++ TM proposal (\S\ref{sec:cpp:adding_transactions}), and 
\item that coalescing transactions is unsound in Power (\S\ref{sec:metatheory:monotonicity}).
\end{itemize}

Although TM is conceptually simple, it is notoriously challenging to implement correctly, as exemplified by Intel repeatedly having to disable TM in its processors due to bugs~\cite{hachman14, skl105}, IBM describing adding TM to Power as ``arguably the single-most invasive change ever made to IBM's RISC architecture''~\cite{adir+14}, and the C++ TM Study Group listing ``conflict with the C++ memory model and atomics'' as one of their hardest challenges~\cite{wong14}.
To cope with the combined complexity of transactions and weak memory that exist in real systems, we build on several recent advances in automated tooling to help develop and validate our models. 
In the x86 and Power cases, we use the SAT-based \Memalloy{} tool \cite{wickerson+17}, extended with an exhaustive enumeration mode \`a la \citet{lustig+17}, to automatically synthesise exactly the `minimally forbidden' tests (up to a bounded size) that distinguish our TM models from their respective non-TM baselines.
We then use the \Litmus{} tool~\cite{alglave+11a} to check that these tests are never observed on existing hardware (i.e., that our models are sound).
We also generate a set of `maximally allowed' tests, which we use to assess the completeness of our models (i.e., how many of the behaviours our models allow are empirically observable).

Moreover, we investigate several properties of our models.
For instance, C++ offers `atomic' transactions and `relaxed' transactions; we prove that atomic transactions are strongly isolated, and that race-free programs with no non-SC atomics and no relaxed transactions enjoy `transactional SC'. 
Other properties of our models we verify up to a bound using \Memalloy{}; these are that introducing, enlarging, or coalescing transactions introduces no new behaviours, and that C++ transactions compile soundly to x86, Power, and ARMv8.

Finally, we show how \Memalloy{} can be used to check a library
implementation against its specification by encoding it as a program
transformation.
We apply our technique to check that x86 and Power lock elision libraries correctly implement mutual exclusion -- but that this is not so, as we have seen, in ARMv8.

\paragraph{Summary} Our contributions are as follows:

\begin{itemize}

\item a fully-automated toolflow for generating tests from an axiomatic memory model and using them to validate the model's soundness, its completeness, and its metatheoretical properties (\S\ref{sec:methodology});

\item formalisations of TM in the SC (\S\ref{sec:transactions}), x86 (\S\ref{sec:x86_power}), Power (\S\ref{sec:x86_power}), ARMv8 (\S\ref{sec:arm}), and C++ (\S\ref{sec:cpp}) memory models;

\item proofs that the transactional C++ memory model guarantees strong isolation for atomic transactions, and transactional SC for race-free programs with no non-SC atomics or non-atomic transactions (\S\ref{sec:cpp}); 

\item the automatic, bounded verification of transactional monotonicity and compilation from C++ transactions to hardware (\S\ref{sec:metatheory}); and

\item a technique for validating lock elision against hardware TM models, which is shown to be effective through the discovery of the serious flaw of Example~\ref{ex:hle} (\S\ref{sec:metatheory}).

\end{itemize}

\paragraph{Companion Material} We provide all the models we propose (in the \cat{} format~\cite{alglave+14}), the automatically-generated litmus tests used to validate our models, litmus tests corresponding to all the executions discussed in our paper, and Isabelle proofs of all statements marked with the \isabelleqed{} symbol.

\section{Background: Axiomatic Memory Models}
\label{sec:memory_models}

\newcommand\semi{\mathbin{\hspace{-0.2ex};\hspace{-0.2ex}}}
\newcommand\eqdef{=}
\newcommand\id{\mathit{id}}
\newcommand\domain{\mathsf{domain}}
\newcommand\range{\mathsf{range}}
\newcommand\imm{\mathsf{imm}}
\renewcommand\min{\mathsf{min}}
\newcommand\acyclic{\mathbf{acyclic}}
\newcommand\irreflexive{\mathbf{irreflexive}}
\newcommand\isempty{\mathbf{empty}}
\newcommand\EXT[1]{#1{_{\mathrm{e}}}}
\newcommand\INT[1]{#1{_{\mathrm{i}}}}
\newcommand\LOC[1]{#1{_{\mathrm{loc}}}}
\newcommand\DLOC[1]{#1{_{\neq\mathrm{loc}}}}
\newcommand\Exec{\mathbb{X}}

\newcommand\loc{\mathit{loc}}
\newcommand\sloc{\mathit{sloc}}

\newcommand\rfe{\EXT{\rf\hspace*{-1.4pt}}}
\newcommand\coe{\EXT{\co}}
\newcommand\fre{\EXT{\fr\hspace*{-0.5pt}}}
\newcommand\rfi{\INT{\rf\hspace*{-1.4pt}}}
\newcommand\coi{\INT{\co\hspace{0.5pt}}}
\newcommand\fri{\INT{\fr}}

\newcommand\com{\mathit{com}}
\newcommand\come{\EXT{\com}}
\newcommand\comi{\INT{\com}}

\newcommand\hb{\mathit{hb}}

Here we give the necessary background on the formal framework we use for reasoning about programs, which is standard across several recent works ~\cite{alglave+14, wickerson+17, lustig+17}.

A \emph{memory model} defines how threads interact with shared memory. 
An \emph{axiomatic} memory model consists of constraints (i.e., axioms) on \emph{candidate executions}.
An execution is a graph representing a runtime behaviour, whose structure is defined below.
The candidate executions of a program are obtained by assuming a non-deterministic memory system: each load can observe a store from anywhere in the program.
After filtering away the candidates that fail the constraints, we are left with the \emph{consistent} executions; i.e., those that are allowed in the presence of the actual memory system.

\subsection{Executions}
\label{sec:memory_models:executions}
Let $\Exec$ be the set of all executions.
Each execution is a graph whose vertices, $E$, represent runtime memory-related events and whose labelled edges represent various relations between them. 
The events are partitioned into $R$, $W$, and $F$, the sets of read, write, and fence events.\footnote{
We encode fences as \emph{events} (rather than edges) because this simplifies execution minimisation (\S\ref{sec:methodology:empirical}). 
We then derive architecture-specific fence relations that connect events separated by fence events, which we use in our models and execution graphs.} 
Events in an execution are connected by the following relations:
\begin{itemize}

\item $\po$, program order (a.k.a.~sequenced-before);

\item $\addr$/$\ctrl$/$\data$, an address/control/data dependency;

\item $\rmw$, to indicate read-modify-write operations;

\item $\sloc$, between events that access the same location;

\item $\rf$, the `reads-from' relation; and

\item $\co$, the `coherence' order in which writes hit memory.

\end{itemize}
We restrict our attention to executions that are \emph{well-formed} as follows: 
$\po$ forms, for each thread, a strict total order over that thread's events; 
$\addr$, $\ctrl$, and $\data$ are within $\po$ and always originate at a read;
$\rmw$ links the read of an RMW operation to its corresponding write;
$\rf$ connects writes to reads accessing the same location, with no read having more than one incoming $\rf$ edge; and
$\co$ connects writes to the same location and forms, for each location, a strict total order over the writes to that location.

\paragraph{Notation} 
Given a relation $r$, $r^{-1}$ is its inverse, $r^?$ is its reflexive closure, $r^+$ is its transitive closure, and $r^*$ is its reflexive transitive closure. 
We use $\neg$ for the complement of a set or relation, implicitly with respect to the set of all events or event pairs in the execution. 
We write `${;}$' for relational composition: $r_1\semi r_2 = \{(x,z)\mid \exists y\ldotp (x,y)\in r_1 \wedge (y,z)\in r_2\}$. 
We write $[-]$ to lift a set to a relation: $[s] = \{(x,x)\mid x\in s\}$. 
To restrict a relation $r$ to being inter-thread or intra-thread, we use $\EXT{r} = r\setminus(\po \cup \po^{-1})^*$ or $\INT{r} = r\cap (\po \cup \po^{-1})^*$, respectively. 
Similarly, $\LOC{r} = r\cap\sloc$. 

\paragraph{Derived Relations} 
The \emph{from-read} ($\fr$) relation relates each read event to all the write events on the same location that are $\co$-later than the write the read observed~\cite{lustig+17}. The $\com$ relation captures three ways events can `communicate' with each other.
\begin{eqnarray*}
\fr &=& ([R]\semi\sloc\semi[W]) \setminus (\rf^{-1}\semi (\co^{-1})^*) \\
\com &=& \rf \cup \co \cup \fr
\end{eqnarray*}

\begin{figure}
\centering
\begin{tikzpicture}[inner sep=1pt,baseline=4mm]
\node (a1) at (0,1) {$\evtlbl{$a$}\evW{}{x}{1}$};
\node (a2) at (0,0) {$\evtlbl{$b$}\evR{}{x}{3}$};
\node (b1) at (1.8,1) {$\evtlbl{$c$}\evW{}{x}{1}$};
\draw[edgeco] (a1) to [auto] node {$\co$} (b1);
\draw[edgepo] (a1) to [auto] node {$\po$} (a2);
\draw[edgerf] (b1) to [auto] node {$\rf$} (a2);
\end{tikzpicture}
\hspace*{2mm}
\renewcommand\arraystretch{0.9}
\begin{tabular}{@{~}r@{~}l||r@{~}l@{~}}
\hline
\multicolumn{4}{c}{Initially: $"[X0]"=x=0$}                      \\ 
\hline
"$a$:" & "[X0] $\leftarrow$ 1"  & "$c$:" & "[X0] $\leftarrow$ 2" \\
"$b$:" & "r0 $\leftarrow$ [X0]" &                                \\ 
\hline
\multicolumn{4}{c}{Test: $"r0"=2 \wedge x=2$}                    \\ 
\hline
\end{tabular}
\caption{An execution and its litmus test}
\label{fig:sample}
\end{figure}

\paragraph{Visualising Executions} 
We represent executions using diagrams like the one in Fig.~\ref{fig:sample} (left).
Here, the $\po$-edges separate the execution's events into two threads, each drawn in one column. 
Each event is labelled with the sets it belongs to, such as $R$ and $W$.
We use location names such as $x$ to identify the $\sloc$-classes.

\subsection{From Executions to Litmus Tests}
\label{sec:memory_models:litmus}

In order to test whether an execution of interest is observable in practice, it is necessary to convert it into a \emph{litmus test} (i.e., a program with a postcondition)~\cite{collier92}. 
This litmus test is constructed so that the postcondition only passes when the particular execution of interest has been taken~\cite{alglave+10, wickerson+17}.

As an example, the execution on the left of Fig.~\ref{fig:sample} corresponds to the pseudocode litmus test on the right. 
Read events become loads, writes become stores, and the $\po$-edges induce the order of instructions and their partitioning into threads. 
To ensure that the litmus test passes only when the intended $\rf$-edges are present, we arrange that each store writes a unique non-zero value, and then check that each local register holds the value written by the store it was intended to observe -- this corresponds to the $"r0" = 2$ in the postcondition.
To ensure that the intended $\co$-edges are present, we check the final value of each memory location -- this corresponds to the $x=2$ in the postcondition.\footnote{
When there are more than two writes to a location, extra constraints on executions are needed to fix all the $\co$-edges~\cite{wickerson+17}.}

\section{Axiomatising Transactions}
\label{sec:transactions}

\newcommand\stxn{\mathit{stxn}}
\newcommand\stronglift{\mathsf{stronglift}}
\newcommand\weaklift{\mathsf{weaklift}}

\newcommand\Order{\ax{Order}}
\newcommand\TxnOrder{\ax{TxnOrder}}

\newcommand\axlabel[1]{\textsc{(#1)}}
\newcommand\axiom[2]{\multicolumn{3}{@{}P}{\hspace*{1.2mm}#2\hfill\axlabel{#1}\hspace*{0.7mm}}}
\newcommand\where{\quad\text{where}~}
\newcommand\header[1]{\multicolumn{3}{@{}>{\raggedright}p{\linewidth}}{#1}}
\newlength{\myframesep}
\setlength{\myframesep}{3pt} 
\newlength{\axwidth}
\setlength{\axwidth}{\linewidth}
\addtolength{\axwidth}{-1.3mm}
\addtolength{\axwidth}{-2\myframesep}

\newcommand\newaxiom[2]{\multicolumn{3}{@{}P}{\dashboxed[\axwidth]{}#2 \hfill \axlabel{#1}\hspace*{0.7mm}}}

\newenvironment{axiomatisationWithoutBox}{%
\renewcommand\arraystretch{1.1}%
\begin{tabular*}{\linewidth}{@{}R@{~~}C@{~~}L@{}}
}{
\end{tabular*}
}

\newenvironment{axiomatisation}{%

\renewcommand\FrameSep{\myframesep}%
\begin{framed}%
\begin{axiomatisationWithoutBox}%
}{
\end{axiomatisationWithoutBox}%
\end{framed}%
}

Transactional memory (TM) can be provided either at the architecture level (x86, Power, ARMv8) or in software (C++). 
Since we are concerned only with the \emph{specification} of TM, and not its implementation, we can formalise both forms within a unified framework.
In this section, we describe how program executions can be extended to express transactions (\S\ref{sec:transactions:executions}) and how we can derive litmus tests to test for these executions (\S\ref{sec:transactions:litmus}). 
We then propose axioms for capturing the \emph{isolation} of transactions (\S\ref{sec:transactions:isolation}), and for strengthening the SC memory model to obtain \emph{transactional} SC (\S\ref{sec:transactions:tsc}).

\subsection{Transactional Executions}
\label{sec:transactions:executions}

To enable transactions in an axiomatic memory modelling framework, we extend executions with an $\stxn$ relation that relates events in the same successful (i.e., committed) transaction.
For an execution to be well-formed, $\stxn$ must be a partial equivalence relation (i.e., symmetric and transitive), and each $\stxn$-class must coincide with a contiguous subset of $\po$.
When generating the candidate executions of a program with transactions, each transaction is assumed to succeed or fail non-deterministically. 
That is, each either gives rise to a $\stxn$-class of events, or vanishes as a no-op.

\begin{figure}
\centering
\begin{tikzpicture}[inner sep=1pt]
\node (a1) at (0,1) {$\evtlbl{$a$}\evW{}{x}{1}$};
\node (a2) at (0,0) {$\evtlbl{$b$}\evR{}{x}{3}$};
\node (b1) at (1.5,1) {$\evtlbl{$c$}\evW{}{x}{1}$};
\draw[edgeco] (a1) to [auto,pos=0.6] node {$\vphantom{p}\co$} (b1);
\draw[edgepo] (a1) to [auto] node {$\po$} (a2);
\draw[edgerf] (b1) to [auto] node {$\rf$} (a2);
\node[stxn, fit=(a1)(a2)] {};
\end{tikzpicture}
\hfill
\renewcommand\arraystretch{0.9}
\renewcommand\tabcolsep{1mm}
\begin{tabular}{r@{~}l||r@{~}l}
\hline
\multicolumn{4}{c}{Initially: $"[X0]"=x=0$, $"[X1]"=\mathit{ok}=1$} \\ 
\hline
       & "txbegin L$_{\rm fail}$" & $c$":" & "[X0] $\leftarrow$ 2"  \\
$a$":" & "[X0] $\leftarrow$ 1"    &                                 \\
$b$":" & "r0 $\leftarrow$ [X0]"   &                                 \\
       & "txend"                  &                                 \\
       & "goto L$_{\rm succ}$"    &                                 \\  
"L$_{\rm fail}$:" & "[X1] $\leftarrow$ 0" &                         \\  
"L$_{\rm succ}$:" &               &                                 \\ 
\hline
\multicolumn{4}{c}{Test: $\mathit{ok}=1 \wedge "r0"=2 \wedge x=2$}  \\ 
\hline
\end{tabular}
\caption{A transactional execution and its litmus test }
\label{fig:sample_txn}
\end{figure}

Diagrammatically, we represent $\stxn$ using \smash{%
\begin{tikzpicture}[baseline=(a.base)]
\node[inner sep=0](a){boxes.};
\node[stxn, fit=(a)] {};
\end{tikzpicture}}
For instance, events $a$ and $b$ in Fig.~\ref{fig:sample_txn} form a successful transaction.


%
%

\begin{Remark}
To study the behaviour of unsuccessful transactions in more detail, one could add an explicit representation of them in executions, perhaps using 
\smash{%
\begin{tikzpicture}[baseline=(a.base)]
\node[inner sep=0](a){dashed boxes.};
\node[ftxn, fit=(a)] {};
\end{tikzpicture}}
However, the behaviour of unsuccessful transactions is tricky to ascertain on hardware because of the rollback mechanism.
Moreover, it is unclear how they should interact with $\co$, since $\co$ is the order in which writes hit the memory, which writes in unsuccessful transactions never do.
\end{Remark}

\subsection{From Transactional Executions to Litmus Tests}
\label{sec:transactions:litmus}

A transactional execution can be converted into a litmus test by extending the construction of \S\ref{sec:memory_models:litmus}.
As an example, the execution on the left of Fig.~\ref{fig:sample_txn} corresponds to the litmus test on the right.
The instructions in the transaction simply need enclosing in instructions that begin and end a transaction. 
We write these as "txbegin" and "txend" here; our tooling specialises these for each target architecture. 
The postcondition checks that the transaction succeeded using the `$\mathit{ok}$' location, which is zeroed in the transaction's fail-handler, "L$_{\rm fail}$", the label of which is provided with the "txbegin" instruction.

\subsection{Weak and Strong Isolation}
\label{sec:transactions:isolation}

We now explain how the \emph{isolation} property of transactions can be captured as a property of an execution graph.
%
A TM system provides \emph{weak} isolation if transactions are isolated from other transactions; that is, their intermediate state cannot affect or be affected by other transactions~\cite{blundell+06, harris+10}. 
It provides \emph{strong} isolation if transactions are also isolated from non-transactional code.

\begin{figure}
\begin{subfigure}[b]{0.24\linewidth}
\begin{tikzpicture}[inner sep=1pt, baseline=0.5cm]
\node (a1) at (0,0.7) {$\evR{}{x}{0}$};
\node (a2) at (0,0) {$\evR{}{x}{1}$};
\node (b1) at (1.3,0.35) {$\evW{}{x}{1}$};
\draw[edgefr] (a1) to [auto] node {$\fr$} (b1);
\draw[edgerf] (b1) to [auto] node {$\rf$} (a2);
\draw[edgepo] (a1) to [auto] node (po) {$~\po$} (a2);
\node[stxn, fit=(a1)(a2)(po)] {};
\end{tikzpicture}
\caption{}
\end{subfigure}
\begin{subfigure}[b]{0.24\linewidth}
\begin{tikzpicture}[inner sep=1pt, baseline=0.5cm]
\node (a1) at (0,0.7) {$\evR{}{x}{0}$};
\node (a2) at (0,0) {$\evW{}{x}{2}$};
\node (b1) at (1.3,0.35) {$\evW{}{x}{1}$};
\draw[edgefr] (a1) to [auto] node {$\fr$} (b1);
\draw[edgeco] (b1) to [auto] node {$\co$} (a2);
\draw[edgepo] (a1) to [auto] node (po) {$~\po$} (a2);
\node[stxn, fit=(a1)(a2)(po)] {};
\end{tikzpicture}
\caption{}
\end{subfigure}
\begin{subfigure}[b]{0.24\linewidth}
\begin{tikzpicture}[inner sep=1pt, baseline=0.5cm]
\node (a1) at (0,0.7) {$\evW{}{x}{1}$};
\node (a2) at (0,0) {$\evR{}{x}{2}$};
\node (b1) at (1.3,0.35) {$\evW{}{x}{2}$};
\draw[edgeco] (a1) to [auto] node {$\co$} (b1);
\draw[edgerf] (b1) to [auto] node {$\rf$} (a2);
\draw[edgepo] (a1) to [auto] node (po) {$~\po$} (a2);
\node[stxn, fit=(a1)(a2)(po)] {};
\end{tikzpicture}
\caption{}
\end{subfigure}
\begin{subfigure}[b]{0.24\linewidth}
\begin{tikzpicture}[inner sep=1pt, baseline=0.5cm]
\node (a1) at (0,0.7) {$\evW{}{x}{1}$};
\node (a2) at (0,0) {$\evW{}{x}{2}$};
\node (b1) at (1.3,0.35) {$\evR{}{x}{1}$};
\draw[edgerf] (a1) to [auto] node {$\rf$} (b1);
\draw[edgefr] (b1) to [auto] node {$\fr$} (a2);
\draw[edgeco, shift right=0.5mm] (a1) to [auto,swap] node (co) {$\co~$} (a2);
\draw[edgepo, shift left=0.5mm] (a1) to [auto] node (po) {$~\po$} (a2);
\node[stxn, fit=(a1)(a2)(po)(co)] {};
\end{tikzpicture}
\caption{}
\end{subfigure}
\caption{Four SC executions that are allowed by weak isolation but forbidden by strong isolation}
\label{fig:weak_strong_isolation}
\end{figure}

\newcommand\AtomicRMW{\ax{RMWIsol}}

The four 3-event SC executions in Fig.~\ref{fig:weak_strong_isolation} illustrate the difference between strong and weak isolation.
In each, the interfering event would need to be within a transaction to be forbidden by weak isolation; strong isolation does not make this distinction.
Executions {\bf (a)} and {\bf (d)} correspond to what \citeauthor{blundell+06} call \emph{non-interference} and \emph{containment}, respectively, and {\bf (b)} is similar to the standard axiom for RMW isolation (cf. \AtomicRMW{} in Fig.~\ref{fig:axioms_x86}).

\newcommand\WeakIsol{\ax{WeakIsol}}
\newcommand\StrongIsol{\ax{StrongIsol}}

Failures of isolation can be characterised as communication cycles between transactions. 
To define these cycles, the following constructions are useful:
\begin{eqnarray*}
\weaklift(r,t) &=& t\semi (r\setminus t)\semi t \\
\stronglift(r,t) &=& t^?\semi (r\setminus t)\semi t^?.
\end{eqnarray*}
If $r$ relates events $e_1$ and $e_2$ in different transactions, then $\weaklift(r,\stxn)$ relates all the events in $e_1$'s transaction to all those in $e_2$'s transaction. 
The $\stronglift$ version also includes edges where the source and/or the target event are not in a transaction.
Weak and strong isolation can then be axiomatised by treating all the events in a transaction as a single event whenever the transaction communicates with another transaction (\WeakIsol) or any other event (\StrongIsol).
\begin{align}
\tag{\WeakIsol}
\acyclic(\weaklift(\com,\stxn)) \\
\tag{\StrongIsol}
\acyclic(\stronglift(\com,\stxn))
\end{align}
%

\subsection{Transactional Sequential Consistency}
\label{sec:transactions:tsc}

\begin{figure}
\centering
\begin{axiomatisation}
\axiom{\Order}{\acyclic(\hb) \where \hb = \po\cup\com}
\\
\newaxiom{\TxnOrder}{\acyclic(\stronglift(\hb,\stxn))} \\
\end{axiomatisation}
\caption{SC axioms~\cite{shasha+88}, with TSC extensions  \dashboxed{\text{highlighted}}}
\label{fig:axioms_tsc}
\end{figure}

Although isolation is a critical property for transactions, it only provides a lower bound on the guarantees that real architectures provide.
Meanwhile, an upper bound on the guarantees provided by a reasonable TM implementation is \emph{transactional sequential consistency} (TSC)~\cite{dalessandro+09}. 
The models we propose in \S\ref{sec:x86_power}--\ref{sec:cpp} all lie between these bounds.

TSC is a strengthening of the SC memory model in which consecutive events in a transaction must appear consecutively in the overall execution order.
Where SC can be characterised axiomatically (Fig.~\ref{fig:axioms_tsc}) by forbidding cycles in program order and communication (\Order{})~\cite{shasha+88}, we can obtain TSC by additionally forbidding such cycles between transactions and non-transactional events (\TxnOrder{}).
Note that \TxnOrder{} subsumes the \StrongIsol{} axiom.



\section{Methodology}
\label{sec:methodology}

\newcommand\conformance{\mathsf{min\mhyphen{}inconsistent}}
\newcommand\maxconsistent{\mathsf{max\mhyphen{}consistent}}
\newcommand\consistent{\mathsf{consistent}}

\newcommand\mfence{\mathit{mfence}}
\newcommand\lwsync{\mathit{lwsync}}
\newcommand\sync{\mathit{sync}}
\newcommand\REL{\mathit{Rel}}
\newcommand\SC{\mathit{SC}}

We identify three components of a memory modelling methodology: (1) developing and refining axioms, (2) synthesising and running conformance tests, and (3) checking metatheoretical properties. 
In this section, we explain our approach to each of these components, and in particular, how we have extended the \Memalloy{} tool~\cite{wickerson+17} to support each task.

\paragraph{Background on \Memalloy{}}

The original \Memalloy{} tool, built on top of Alloy~\cite{jackson12a}, was developed for comparing memory models.
It takes two models (say, $M$ and $N$), and searches for a single execution that distinguishes them (i.e., is inconsistent under $M$ but consistent under $N$). 
Additionally, if \Memalloy{} is supplied with a translation on executions (e.g., representing a compiler mapping or a compiler optimisation), then it searches for a witness that the translation is unsound.
This translation is defined by a relation, typically named $\pi$, from `source' events to `target' events.

\subsection{Developing and Refining Axioms}

For each model, we make a first attempt at a set of axioms using information obtained from specifications, programming manuals, research papers, and discussions with designers.
Then, for each proposed change to the model, we use \Memalloy{} to generate tests that become disallowed or allowed as a result.
We decide whether to accept the change based on discussing these tests with designers, and running them on existing hardware (where available) using the \Litmus{} tool~\cite{alglave+11a}.

In order to extend \Memalloy{} to support the development of transactional memory models in this way, we augmented the form of executions as described in \S\ref{sec:transactions:executions}, and modified the litmus test generator as described in \S\ref{sec:transactions:litmus}.

\subsection{Synthesising and Running Conformance Tests}
\label{sec:methodology:empirical}

To build confidence in a model, we compare the behaviours it admits against those allowed by the architecture or language being modelled. 
It is vital that no behaviour allowed by the architecture/language is forbidden by the model, so we exhaustively generate all litmus tests (up to a bounded size) that our model forbids, and confirm using \Litmus{} that none can be observed on existing hardware.

To achieve this, we extended \Memalloy{} with a mode for exhaustively generating conformance tests for a given model $M$. 
The key to exhaustive generation is a suitable notion of \emph{minimality}, without which we would obtain an infeasibly large number of tests.
We closely follow \citet{lustig+17}, and define execution minimality with respect to the following partial order between executions. Let $X\sqsubset Y$ hold when execution $X$ can be obtained from execution $Y$ by:
\begin{enumerate}[label=(\roman*)]

\item removing an event (plus any incident edges),

\item removing a dependency edge ($\addr$, $\ctrl$, $\data$, $\rmw$), or

\item downgrading an event (e.g. reducing an acquire-read to a plain read in ARMv8).

\end{enumerate}
We then calculate the set $\conformance(M) \eqdef \{X \in \Exec \mid X \notin \consistent(M) \wedge \forall X'\sqsubset X\ldotp X'\in\consistent(M)\}$.


Extending \Memalloy{} to support the synthesis of transactional conformance tests requires minimality to take transactions into account.
To do this, we arrange that $X \sqsubset Y$ also holds when $X$ can be obtained from $Y$ by:
\begin{enumerate}[label=(\roman*), start=5]
\item making the first or last event in a transaction non-trans\-actional (i.e. removing all of its incident $\stxn$ edges). 
\end{enumerate}
(We avoid the `middle' of a transaction so as not to create non-contiguous transactions and hence ill-formed executions.)
\begin{Remark}
While this is a slightly coarse notion of minimality -- a more refined version would also allow a large transaction to be chopped into two smaller ones -- it is cheap to implement in the constraint solver as it only requires quantification over a single event.
As a result, \Memalloy{} may generate some executions that appear non-minimal, but as we show in~\S\ref{sec:x86_power:testing}, this does not impede our ability to generate and run large batches of conformance tests.
\end{Remark}

\paragraph{Generating Allowed Tests}

Having generated the minimal\-ly-forbidden tests, the question naturally arises of whether we can generate the \emph{maximally-allowed} tests too.
Where the minimally-forbidden tests include just enough fences/de\-pen\-den\-cies/transactions to be forbidden (and failing to observe these tests empirically suggests that the model is not too strong), the maximally-allowed tests include just \emph{not} enough (and observing them suggests that the model is not too weak).
We found the maximally-allowed tests valuable for communicating with engineers about the detailed relaxations permitted by our models.
However, in our experiments, allowed tests are less conclusive than forbidden ones, because where the observation of a forbidden test implies that the model is unsound, the non-observation of an allowed test may be caused by not performing enough runs, or by the machine under test being implemented conservatively.

Moreover, the notion of execution maximality is not as natural as minimality. 
For instance, an inconsistent execution is only considered minimally-inconsistent if \emph{removing} any event makes it \emph{consistent}, yet it is not sensible to deem a consistent execution maximally-consistent only when \emph{adding} any event makes it \emph{inconsistent} -- such a condition is almost impossible to satisfy. 
Even with event addition/removal set aside, maximal-consistency tends to require executions to be littered with redundant fences and dependencies.

Therefore, we approximate the maximally-consistent executions as those obtained via a single $\sqsubset$-step from a minimally-inconsistent execution. That is, we let $\maxconsistent(M) \eqdef \{X \in \Exec \mid \exists Y\in\conformance(M)\ldotp X \sqsubset Y\}$.








\subsection{Checking Metatheoretical Properties}
\label{sec:methodology:libabs}

As explained at the start of this section, \Memalloy{} is able to validate transformations between two memory models, providing they can be encoded as a $\pi$-relation between executions. 
In \S\ref{sec:metatheory}, we exploit this ability to check several TM-related transformations and compiler mappings.

In fact, \Memalloy{} can also be used to check libraries under weak memory.
Prior work has (manually) verified that stack, queue, and barrier libraries implement their specifications under weak memory models~\cite{batty+13, sorensen+16a}; here we show how checking these types of properties can be automated up to a bounded number of library and client events.
We see this as a straightforward first-step towards a general verification effort.
The idea, which we apply to a lock elision library in \S\ref{sec:metatheory:hle}, is to treat the replacement of the library's specification with its implementation as a program transformation.
To do this, we first extend executions with events that represent method calls.
Second, we extend execution well-formedness so that illegal call sequences (such as popping from an empty stack) are rejected.
Third, we strengthen the memory model's consistency predicate with axioms capturing the library's obligations (such as pops never returning data from later pushes).
Finally, we constrain $\pi$ so that it maps each method call to an event sequence representing the implementation of that method.

\newcommand\Locked{\mathit{L}}

\newcommand\ppo{\mathit{ppo}}
\newcommand\implied{\mathit{implied}}
\newcommand\tfence{\mathit{tfence}}

\newcommand\Coherence{\ax{Coherence}}

\newcommand\isync{\mathit{isync}}

\newcommand\fence{\mathit{fence}}
\newcommand\cfence{\mathit{efence}}
\newcommand\propI{\mathit{prop_1}}
\newcommand\propII{\mathit{prop_2}}
\newcommand\tpropI{\mathit{tprop_1}}
\newcommand\tpropII{\mathit{tprop_2}}
\newcommand\prop{\mathit{prop}}
\newcommand\ihb{\mathit{ihb}}
\newcommand\thb{\mathit{thb}}
\newcommand\XW{\mathit{XW}}

\newcommand\Propagation{\ax{Propagation}}
\newcommand\Observation{\ax{Observation}}
\newcommand\ExclWrites{\ax{ExclWrites}}
\newcommand\TxnCancelsRMW{\ax{TxnCancelsRMW}}

\section{Transactions in x86 and Power}
\label{sec:x86_power}

Over the next three sections, we show how our methodology can be applied to four different targets.
We begin with x86 and Power, which have both supported TM since 2013~\cite{intel12, cain+13}. 
Intel's Transactional Synchronisation Extensions (TSX) provide "XBEGIN", "XEND", and "XABORT" instructions for starting, committing, and aborting transactions, while Power provides "tbegin", "tend", and "tabort".

%
%

\subsection{Background: the x86 and Power Memory Models}
Both the x86 memory model~\cite{owens+09} and the Power memory model~\cite{alglave+14, sarkar+11, sarkar+12} allow certain instructions to execute out of program order, with x86 allowing stores to be reordered with later loads and Power allowing many more relaxations.
Both architectures provide fences ("MFENCE" in x86, and "lwsync", "sync", and "isync" in Power) to allow these relaxations to be controlled.
The x86 architecture provides atomic RMWs via "LOCK"-prefixed instructions, while Power implements RMWs using \emph{exclusive} instructions like those seen in Example~\ref{ex:hle}.
Moreover, x86 is \emph{multicopy-atomic}~\cite{collier92}, which means that writes are propagated to all other threads simultaneously.
Power does not have this property, so its memory model includes explicit axioms to describe how writes propagate.

More formally, we extend executions with relations that connect events in program order that are separated by a fence event of a given type.
For x86, we add an $\mfence$ relation, and for Power, we add $\isync$, $\lwsync$, and $\sync$.
%

\begin{figure}
\centering
\begin{axiomatisation}
\axiom{\Coherence}{\acyclic(\LOC{\po} \cup \com)}
\\
\axiom{\AtomicRMW}{\isempty(\rmw \cap (\fre \semi \coe))} \\
\axiom{\Order}{\acyclic(\hb)}
\\
\where \ppo &=& \stack{((W \times W) \cup (R\times W) \cup (R\times R)) \cap \po}
\\
\dashboxed[63mm]{}\tfence &=& \po\cap((\neg\stxn\semi\stxn) \cup (\stxn\semi\neg\stxn))
\\
\Locked &=& \domain(\rmw) \cup \range(\rmw) \\
\implied &=& \stack{[\Locked]\semi\po \cup \po\semi[\Locked] \dashboxed{{}\cup \tfence}}
\\
\hb &=& \mfence \cup \ppo \cup \implied \cup \rfe \cup \fr \cup \co
\\
\newaxiom{\StrongIsol}{\acyclic(\stronglift(\com,\stxn))} \\
\newaxiom{\TxnOrder}{\acyclic(\stronglift(\hb,\stxn))} \\
\end{axiomatisation}
\caption{x86 consistency axioms~\cite{alglave+14}, with our
TM additions \dashboxed{\text{highlighted}}}
\label{fig:axioms_x86}
\end{figure}

An x86 execution is consistent if it satisfies all of the axioms in Fig.~\ref{fig:axioms_x86} (ignoring the highlighted regions for now).
The \Coherence{} axiom forbids cycles in communication edges and program order among events on the same location; this guarantees programs that use only a single location to have SC semantics.
Happens-before ($\hb$) imposes the event-ordering constraints upon which all threads must agree, and \Order{} ensures that $\hb^*$ is a partial order.
The constraints on $\hb$ arise from: fences placed by the programmer ($\mfence$), fences created implicitly by "LOCK"'d operations ($\implied$), the preserved fragment of the program order ($\ppo$), inter-thread observations ($\rfe$) and communication edges ($\fr$ and $\co$).

\begin{figure}
\centering
\begin{axiomatisation}
\axiom{\Coherence}{\acyclic(\LOC{\po} \cup \com)}
\\
\axiom{\AtomicRMW}{\isempty(\rmw \cap (\fre\semi \coe))}
\\
\axiom{\Order}{\acyclic(\hb)}
\\
\where \ppo &=& \modelcomment{preserved program order, elided}
\\
\dashboxed[63mm]{}\tfence &=& \po\cap((\neg\stxn\semi\stxn) \cup (\stxn\semi\neg\stxn))
\\
\fence &=& \sync \dashboxed{{}\cup \tfence} \cup (\lwsync \setminus (W\times R))
\\
\ihb &=& \ppo \cup \fence
\\
\dashboxed[65mm]{}\thb &=& (\rfe \!\cup ((\fre \!\cup\! \coe)^*\!\semi\ihb))^*\!\semi(\fre \!\cup\! \coe)^*\!\semi\rfe^?
\\
\hb &=& (\rfe^?\semi\ihb\semi \rfe^?) \dashboxed{{}\cup \weaklift(\thb,\stxn)}
\\
\axiom{\Propagation}{\acyclic(\co \cup \prop)}
\\
\where \cfence &=& \rfe^?\semi\fence\semi \rfe^?
\\
\propI &=& [W]\semi \cfence\semi\hb^*\semi[W]
\\
\propII &=& \come^*\!\semi \cfence^*\!\semi\hb^*\!\semi (\sync \dashboxed{{}\cup\tfence}) \semi \hb^*
\\
\dashboxed[32mm]{}\tpropI &=& \rfe\semi\stxn\semi[W]
\\
\dashboxed[25mm]{}\tpropII &=& \stxn\semi\rfe
\\
\prop &=& \propI \cup \propII \dashboxed{{}\cup \tpropI \cup \tpropII}
\\
\axiom{\Observation}{\irreflexive(\fre\semi \prop \semi \hb^*)}
\\
\newaxiom{\StrongIsol}{\acyclic(\stronglift(\com,\stxn))} \\
\newaxiom{\TxnOrder}{\acyclic(\stronglift(\hb,\stxn))} \\
\newaxiom{\TxnCancelsRMW}{\isempty(\rmw \cap \tfence^*)} \\
\end{axiomatisation}
\caption{Power consistency axioms~\cite{alglave+14}, with our
TM additions \dashboxed{\text{highlighted}}, and some details elided for brevity.}
\label{fig:axioms_power}
\end{figure}

A Power execution is consistent if it satisfies all the axioms in Fig.~\ref{fig:axioms_power} (again, ignoring the highlights).
The first axiom not already seen is \Order{}, which ensures that happens-before ($\hb$) is acyclic.
In contrast to x86, the happens-before relation in Power is formed from inter-thread observations ($\rfe$), the preserved fragment of the program order ($\ppo$), and fences ($\fence$). 
We elide the definition of $\ppo$ as it is complex and unchanged by our TM additions.
The $\prop$ relation governs how fences restrict ``the order in which writes propagate''~\cite{alglave+14}, and the \Propagation{} axiom ensures that this relation does not contradict the coherence order. 
\Observation{} governs which writes a read can observe: if $e_1$ propagates before $e_2$, then any read that happens after $e_2$ is prohibited from observing a write that precedes $e_1$ in coherence order. 

\subsection{Adding Transactions}
\label{sec:x86_power:adding_txns}

To extend the x86 and Power memory models to support TM, we make the following amendments, each highlighted in Figs.~\ref{fig:axioms_x86} and~\ref{fig:axioms_power}.

\paragraph{Strong Isolation (x86 and Power)}
The Power manual says that transactions ``appear atomic with respect to both transactional and non-transactional accesses performed by other threads''~\cite[\S5.1]{power30}, and the TSX manual defines conflicts not just between transactions, but between a transaction and ``another logical processor'' (which is not required to be executing a transaction)~\cite[\S16.2]{intel17}.
We interpret these statements to mean that x86 and Power transactions provide \emph{strong} isolation, so we add our \StrongIsol{} axiom from \S\ref{sec:transactions:isolation}.
%
%

\paragraph{Implicit Transaction Fences (x86 and Power)}
In both x86 and Power, fences are created at the boundaries of successful transactions. In x86, ``a successfully committed [transaction] has the same ordering semantics as a "LOCK" prefixed instruction''~\cite[\S16.3.6]{intel17}, and in Power, ``[a] "tbegin" instruction that begins a successful transaction creates a [cumulative] memory barrier'', as does ``a "tend" instruction that ends a successful transaction''~\cite[\S1.8]{power30}. 
Hence, we define $\tfence$ as the program-order edges that enter ($\neg\stxn\semi\stxn$) or exit ($\stxn\semi\neg\stxn$) a successful transaction, and add $\tfence$ alongside the existing fence relations ($\mfence$ and $\sync$).



\paragraph{Transaction Atomicity (x86 and Power)}
We extend the prohibition on $\hb$ cycles among events to include cycles among transactions (\TxnOrder). 
This essentially treats all the transaction's events as one indivisible event, and is justified by the atomicity guarantee given to transactions, which in x86 is ``that all memory operations [\ldots] appear to have occurred instantaneously when viewed from other logical processors''~\cite[\S16.2]{intel17}, and in Power is that each successful transaction ``appears to execute as an atomic unit as viewed by other processors and mechanisms''~\cite[\S1.8]{power30}.

\paragraph{Barriers within Transactions (Power only)}
Each transaction contains an ``integrated memory barrier'', which ensures that writes observed by a successful transaction are propagated to other threads before writes performed by the transaction itself~\cite[\S1.8]{power30}. 
This behaviour is epitomised by the \ltest{WRC}-style execution below~\cite[Fig.~6]{cain+13},
\begin{equation}
\label{eq:cain_wrc}
\begin{tikzpicture}[inner sep=1pt, baseline=0.35cm]
\node (a1) at (0,0.7) {\evtlbl{$a$}$\evW{}{x}{1}$};
\node (b1) at (2,0.7) {\evtlbl{$b$}$\evR{}{x}{1}$};
\node (b2) at (2,0) {\evtlbl{$c$}$\evW{}{\smash{y}}{1}$};
\node (c1) at (4,0.7) {\evtlbl{$d$}$\evR{}{\smash{y}}{1}$};
\node (c2) at (4,0) {\evtlbl{$e$}$\evR{}{x}{0}$};

\draw[edgefr, overlay] (c2) to [auto, bend left=28, pos=0.8] node {$\fr$} (a1);
\draw[edgerf] (a1) to [auto,swap] node {$\rf$} (b1);
\draw[edgerf] (b2) to [auto,swap] node {$\rf$} (c1);
\draw[edgepo] (b1) to [auto] node {$~\po$} (b2);
\draw[edgepo] (c1) to [auto] node {$~\ppo$} (c2);
\node[stxn, fit=(b1)(b2)] {};
\end{tikzpicture}
\end{equation}
which must be ruled out because the transaction's write ($c$) has propagated to the third thread before a write ($a$) that the transaction observed.
We capture this constraint by extending the $\prop$ relation so that it connects any write observed by a transaction to any write within that transaction ($\tpropI$).
In execution~\eqref{eq:cain_wrc}, this puts a $\prop$ edge from $a$ to $c$.
The execution is thus forbidden by the existing \Observation{} axiom.

\begin{Remark}\label{remark:rwc}
The following executions are similar to \eqref{eq:cain_wrc}, and like \eqref{eq:cain_wrc}, they could not be observed empirically. 
However, the Power manual is ambiguous about whether they should be forbidden.
\begin{equation*}
\begin{tikzpicture}[inner sep=1pt, baseline=0.5cm]
\node (a1) at (0,0.7) {$\evW{}{x}{1}$};
\node (b1) at (1.4,0.7) {$\evR{}{x}{1}$};
\node (b2) at (1.4,0) {$\evR{}{\smash{y}}{0}$};
\node (c1) at (2.8,0.7) {$\evW{}{\smash{y}}{1}$};
\node (c2) at (2.8,0) {$\evR{}{x}{0}$};

\draw[edgefr, overlay] (c2) to [auto, bend left=38, pos=0.8] node {$\fr$} (a1);
\draw[edgerf] (a1) to [auto,swap] node {$\rf$} (b1);
\draw[edgefr] (b2) to [auto,swap] node {$\fr$} (c1);
\draw[edgepo] (b1) to [auto] node (po) {$~\po$} (b2);
\draw[edgepo] (c1) to [auto] node {$~\sync$} (c2);
\node[stxn, fit=(b1)(b2)(po)] {};
\end{tikzpicture}
\hspace*{5mm}
\begin{tikzpicture}[inner sep=1pt, baseline=0.5cm]
\node (a1) at (0,0.7) {$\evW{}{x}{2}$};
\node (b1) at (1.4,0.7) {$\evR{}{x}{2}$};
\node (b2) at (1.4,0) {$\evR{}{\smash{y}}{0}$};
\node (c1) at (2.8,0.7) {$\evW{}{\smash{y}}{1}$};
\node (c2) at (2.8,0) {$\evW{}{x}{1}$};

\draw[edgeco, overlay] (c2) to [auto, bend left=38, pos=0.8] node {$\co$} (a1);
\draw[edgerf] (a1) to [auto,swap] node {$\rf$} (b1);
\draw[edgefr] (b2) to [auto,swap] node {$\fr$} (c1);
\draw[edgepo] (b1) to [auto] node (po) {$~\po$} (b2);
\draw[edgepo] (c1) to [auto] node {$~\sync$} (c2);
\node[stxn, fit=(b1)(b2)(po)] {};
\end{tikzpicture}
\end{equation*}
In particular, because the transactions are read-only, we cannot appeal to the integrated memory barrier.
We have reported this ambiguity to IBM architects, and while we await a clarified specification, our model errs on the side of caution by permitting these executions.
\end{Remark}

\paragraph{Propagation of Transactional Writes (Power only)}
Although Power is not multicopy-atomic in general, \emph{transactional} writes are multicopy-atomic; that is, the architecture will ``propagate the transactional stores fully before committing the transaction''~\cite[\S4.2]{cain+13}. 
This behaviour is epitomised by another \ltest{WRC}-style execution, in which the middle thread sees the transactional write to $x$ before the right thread does.
\begin{equation}
\label{eq:mca_wrc}
\begin{tikzpicture}[inner sep=1pt, baseline=0.35cm]
\node (a1) at (0,0.7) {\evtlbl{$a$}$\evW{}{x}{1}$};
\node (b1) at (2,0.7) {\evtlbl{$b$}$\evR{}{x}{1}$};
\node (b2) at (2,0) {\evtlbl{$c$}$\evW{}{\smash{y}}{1}$};
\node (c1) at (4,0.7) {\evtlbl{$d$}$\evR{}{\smash{y}}{1}$};
\node (c2) at (4,0) {\evtlbl{$e$}$\evR{}{x}{0}$};

\draw[edgefr,overlay] (c2) to [auto, bend left=25, pos=0.8] node {$\fr$} (a1);
\draw[edgerf] (a1) to [auto,swap] node {$\rf$} (b1);
\draw[edgerf] (b2) to [auto,swap] node {$\rf$} (c1);
\draw[edgepo] (b1) to [auto] node {$~\ppo$} (b2);
\draw[edgepo] (c1) to [auto] node {$~\ppo$} (c2);
\node[stxn, fit=(a1)] {};
\end{tikzpicture}
\end{equation}
To rule out such executions, it suffices to extend the $\prop$ relation with reads-from edges that exit a transaction ($\tpropII$), and then to invoke \Observation{} again.

\paragraph{Read-modify-writes (Power only)}
In Power, when a store-exclusive is separated from its corresponding load-exclusive by ``a state change from Transactional to Non-transactional or Non-transactional to Transactional'', the RMW operation will always fail~\cite[\S1.8]{power30}.
Therefore, the \TxnCancelsRMW{} axiom ensures that no consistent execution has an $\rmw$ edge crossing a transaction boundary.

\paragraph{Transaction Ordering (Power only)}
The Power manual states that ``successful transactions are serialised in some order'', and that it is impossible for contradictions to this order to be observed~\cite[p.~824]{power30}.

We capture this constraint by extending the $\hb$ relation to include a new $\thb$ relation between transactions. The $\thb$ relation imposes constraints on the order in which transactions can be serialised.
By including it in $\hb$ and requiring $\thb$ to be a partial order, we guarantee the existence of a suitable transaction serialisation order, without having to construct this order explicitly.

The definition of the $\thb$ relation is a little convoluted, but the intuition is quite straightforward: it contains all non-empty chains of intra-thread happens-before edges ($\ihb$) and inter-thread communication edges ($\come$), except those that contain an $\fre$ or $\coe$ edge followed by an $\rfe$ edge that does not terminate the chain.
The rationale for excluding $\fre\semi\rfe$ and $\coe\semi\rfe$ chains is that these do not provide ordering in a non-multicopy-atomic architecture.
That is, from
\begin{center}
\begin{tikzpicture}[inner sep=1pt]
\node (a) at (0,0) {\vphantom{A}$a$};
\node (b) at (1,0) {\vphantom{A}$b$};
\node (c) at (2,0) {\vphantom{A}$c$};
\draw[edgefr] (a) to [auto] node {$\fre$} (b);
\draw[edgerf] (b) to [auto] node {$\rfe$} (c);
\end{tikzpicture}
\qquad or \qquad
\begin{tikzpicture}[inner sep=1pt]
\node (a) at (0,0) {\vphantom{A}$a$};
\node (b) at (1,0) {\vphantom{A}$b$};
\node (c) at (2,0) {\vphantom{A}$c$};
\draw[edgeco] (a) to [auto] node {$\coe$} (b);
\draw[edgerf] (b) to [auto] node {$\rfe$} (c);
\end{tikzpicture}
\end{center}
we cannot deduce that $a$ happens before $c$, because this behaviour can also be attributed to the write $b$ being propagated to $c$'s thread before $a$'s thread.

\citeauthor{cain+13} epitomise the transaction-ordering constraint using the \ltest{IRIW}-style execution reproduced below~\cite[Fig.~5]{cain+13}.
\begin{equation}
\label{eq:cain_iriw}
\begin{tikzpicture}[inner sep=1pt, baseline=0.35cm]
\node (a1) at (0,0.7) {\evtlbl{$a$}$\evW{}{x}{1}$};
\node (b1) at (1.8,0.7) {\evtlbl{$b$}$\evR{}{x}{1}$};
\node (b2) at (1.8,0) {\evtlbl{$c$}$\evR{}{\smash{y}}{0}$};
\node (c1) at (3.6,0.7) {\evtlbl{$d$}$\evR{}{\smash{y}}{1}$};
\node (c2) at (3.6,0) {\evtlbl{$e$}$\evR{}{x}{0}$};
\node (d1) at (5.4,0.7) {\evtlbl{$f$}$\evW{}{\smash{y}}{1}$};

\draw[edgefr, overlay] (c2) to [auto, bend left=30, pos=0.8] node {$\fr$} (a1);
\draw[edgefr, overlay] (b2) to [auto,swap, bend right=30, pos=0.8] node {$\fr$} (d1);
\draw[edgerf] (a1) to [auto,swap] node {$\rf$} (b1);
\draw[edgerf] (d1) to [auto] node {$\rf$} (c1);
\draw[edgepo] (c1) to [auto] node {$~\ppo$} (c2);
\draw[edgepo] (b1) to [auto,swap] node {$\ppo~$} (b2);
\node[stxn, fit=(a1)] {};
\node[stxn, fit=(d1)] {};
\end{tikzpicture}
\end{equation}
The execution must be disallowed because different threads observe incompatible transaction orders: the second thread observes $a$ before $f$, but the third observes $f$ before $a$.
Our model disallows this execution on the basis of a $\thb$ cycle between the two transactions.

We must be careful not to overgeneralise here, because a behaviour similar to \eqref{eq:cain_iriw} but with only \emph{one} write transactional was observed during our empirical testing, and is duly allowed by our model.

\subsection{Empirical Testing}
\label{sec:x86_power:testing}

\begin{table}
\caption{Testing our transactional x86 and Power models}
\small
\renewcommand\tabcolsep{3.9pt}
\begin{tabular}{@{}lrrrrrrrr@{}}
\toprule
\textbf{Arch.} & \textbf{$\left|E\right|$} & \raisebox{-1.45ex}{\smash{\begin{tabular}{r@{}}\bf Synthesis \\ \bf time (s) \end{tabular}}} & \multicolumn{3}{c}{\textbf{Forbid}} & \multicolumn{3}{c}{\textbf{Allow}} \\[-4pt]
\cmidrule(l){4-6}\cmidrule(l){7-9}
& & & \phantom{00}T & \phantom{00}S & \phantom{0}$\neg$S & \phantom{00}T & \phantom{00}S & \phantom{0}$\neg$S \\
\midrule
x86   &  2 &       4 &    0 &    0 &    0 &    2 &    2 &    0\\
      &  3 &      22 &    4 &    0 &    4 &   24 &   23 &    1\\
      &  4 &      87 &   22 &    0 &   22 &   99 &   99 &    0\\
      &  5 &     260 &   42 &    0 &   42 &  249 &  244 &    5\\
      &  6 &    4402 &  133 &    0 &  133 &  895 &  832 &   63\\
      &  7 & $>$7200 &  307 &    0 &  307 & 2457 & 1901 &  556\\
\midrule
\multicolumn{3}{r}{\textbf{Total (x86):}} & 508 &    0 &  508 & 3726 & 3101 &  625\\
\midrule
Power &  2 &      13 &    2 &    0 &    2 &    7 &    7 &    0\\
      &  3 &      58 &    9 &    0 &    9 &   44 &   44 &    0\\
      &  4 &     318 &   60 &    0 &   60 &  184 &  175 &    9\\
      &  5 &    9552 &  353 &    0 &  353 & 1517 & 1330 &  187\\
      &  6 & $>$7200 &  922 &    0 &  922 & 5043 & 4407 &  636\\
\midrule
\multicolumn{3}{r}{\textbf{Total (Power):}} & 1346 &    0 & 1346 & 6795 & 5963 &  832\\
\bottomrule
\end{tabular}
\label{table:x86_power}
\end{table}

%
%
%
%

Table~\ref{table:x86_power} gives the results obtained using our testing strategy from~\S\ref{sec:methodology:empirical}.
We use \Memalloy{} to synthesise litmus tests that are forbidden by our transactional models but allowed under the non-transactional baselines (the {\bf Forbid} set), up to a bounded number of events ($\left|E\right|$). 
We then derive the {\bf Allow} sets by relaxing each test.
We report synthesis times on a 4-core Haswell i7-4771 machine with 32GB RAM, using a timeout of 2 hours.
For both sets we give the number of tests (T) found; we say this number is complete if synthesis did not reach timeout and non-exhaustive otherwise.
We say a test is seen (S) if it is observed on any implementation, and not seen ($\neg$S) otherwise.
Each x86 test is run 1M times on four TSX implementations:
a Haswell (i7-4771),
a Broadwell-Mobile (i7-5650U),
a Skylake (i7-6700),
and a Kabylake (i7-7600U).
Each Power test is run 10M times on an 80-core POWER8 (TN71-BP012) machine. 
When testing this machine, we use \Litmus{}'s \emph{affinity} parameter~\cite{alglave+11a}, which places threads incrementally across the logical processors to encourage \ltest{IRIW}-style behaviours.

We were able to generate the complete set of x86 {\bf Forbid} executions that have up to 6 events, and the complete set of Power {\bf Forbid} executions up to 5 events.
Regarding these bounds: we remark that since our events only represent memory accesses and fences (not, for instance, starting or committing transactions), we can capture many interesting behaviours with relatively few events. 
For instance, these bounds are large enough to include all the executions discussed in this section.

Of our 508 x86 {\bf Forbid} tests, 29\% had one transaction, 44\% had two, and 27\% had three, 
and of the 1346 Power {\bf Forbid} tests, 29\% had one transaction, 54\% had two, and 17\% had three. 
No {\bf Forbid} test was empirically observable on either architecture, which gives us confidence that our models are not too strong.
Of the x86 {\bf Allow} tests, 83\% could be observed on at least one implementation, as could 88\% of the Power {\bf Allow} tests; this provides some evidence that our models are not excessively weak.
Many of the unobserved Power {\bf Allow} tests are based on the load-buffering (\ltest{LB}) shape, which has never actually been observed on a Power machine, even without transactions~\cite{alglave+14d}.

\def\Xtotalhours{34}
\def\Xtotaltests{313}

\begin{figure}
\centering
\begin{tikzpicture}
\begin{axis}[
height=30mm,
width=80mm,
ymin=-15,
ymax=100,
xmin=-1,
xmax=\Xtotalhours,
extra x ticks = {\Xtotalhours},
every tick/.style={line width=0.4pt},
xlabel={Time (hours)},
ylabel={\rotatebox{-90}{\begin{tabular}{@{}l@{}}Tests \\ found \\ (\%)\end{tabular}}},
axis line style={opacity=0},
tick pos=left,
tick align=outside,
clip=false,
]
\addplot[
mark=no,
draw=black,
]
table[x expr={\thisrow{secs}/3600}, y expr={\thisrow{solns}/\Xtotaltests*100}] {
secs   solns
 0       0
60      0
 60      0
120     0
 120     27
240     27
 240     74
480     74
 480     159
960     159
 960     212
1920    212
 1920    272
3840    272
 3840    295
7680    295
 7680    307
15360   307 
 15360   310
30720   310
 30720   313
122909  313
};
\draw ({rel axis cs:0,0} -|{axis cs:0,0})
-- ({rel axis cs:0,0} -|{axis cs:\Xtotalhours,0});
\draw ({rel axis cs:0,0} |-{axis cs:0,0})
-- ({rel axis cs:0,0} |-{axis cs:0,100});
\end{axis}
\end{tikzpicture}
\caption{The distribution of synthesis times for the 7-event x86 {\bf Forbid} tests}
\label{fig:exec_histogram}
\end{figure}

Increasing the timeout to 48 hours is sufficient to generate the complete set of x86 {\bf Forbid} executions for 7 events.
It takes \Xtotalhours{} hours for \Memalloy{} to find all \Xtotaltests{} tests.
Figure~\ref{fig:exec_histogram} shows how the percentage of executions found is affected by various caps on the synthesis time.
We observe that many tests are found quickly: 98\% of the tests are found within 2 hours (i.e., 6\% of the total synthesis time), and all of the tests are found within 9 hours (the remaining synthesis time is used to confirm that there are no further tests).
During the development process, we exploited this observation to obtain preliminary test results more rapidly.

\section{Transactions in ARMv8}
\label{sec:arm}

\newcommand\dmbld{\mathit{dmbld}}
\newcommand\dmbst{\mathit{dmbst}}
\newcommand\dmb{\mathit{dmb}}
\newcommand\isb{\mathit{isb}}

\newcommand\ob{\mathit{ob}}
\newcommand\aob{\mathit{aob}}
\newcommand\bob{\mathit{bob}}
\newcommand\dob{\mathit{dob}}

\newcommand\notxn{\mathit{notxn}}
\newcommand\poextend{\mathit{poextend}}


The ARMv8 memory model sits roughly between x86 and Power.
Like x86, it is multicopy-atomic~\cite{pulte+17}, but like Power, it permits several relaxations to the program order. 
Unwanted relaxations can be inhibited either using barriers ("DMB", "DMB\,LD", "DMB\,ST", "ISB") or using \emph{release}/\emph{acquire} instructions ("LDAR", "STLR") that act like one-way fences.

Formally, ARMv8 executions are obtained by adding six extra fields: $\Acq$ and $\Rel$, which are the sets of acquire and release events, and $\dmb$/$\dmbld$/$\dmbst$/$\isb$, which relate events in program order that are separated by barriers.

\begin{figure}
\centering
\begin{axiomatisation}
\axiom{\Coherence}{\acyclic (\LOC{\po} \cup \com)}
\\
\axiom{\Order}{\acyclic(\ob)}
\\
\where \dob &=& \modelcomment{order imposed by dependencies, elided}
\\
\aob &=& \modelcomment{order imposed by atomic RMWs, elided}
\\
\bob &=& \modelcomment{order imposed by barriers, elided}
\\
\dashboxed[63mm]{}\tfence &=& \po\cap((\neg\stxn\semi\stxn)\cup(\stxn\semi\neg\stxn))
\\
\ob &=& \come \cup \dob \cup \aob \cup \bob \dashboxed{{}\cup \tfence}
\\
\axiom{\AtomicRMW}{\isempty(\rmw \cap (\fre\semi \coe))} \\
\newaxiom{\StrongIsol}{\acyclic(\stronglift(\com,\stxn))} \\
\newaxiom{\TxnOrder}{\acyclic(\stronglift(\ob,\stxn))} \\
\newaxiom{\TxnCancelsRMW}{\isempty(\rmw \cap \tfence^*)} \\
\end{axiomatisation}
\caption{ARMv8 consistency axioms~\cite{arm17,deacon17}, with our TM additions \dashboxed{\text{highlighted}}, and some details elided for brevity.}
\label{fig:axioms_arm}
\end{figure}

An ARMv8 execution is consistent if it satisfies all of the axioms in Fig.~\ref{fig:axioms_arm} (ignoring the highlighted regions).
We have seen the \Coherence{} and \AtomicRMW{} axioms already.
The ordered-before relation ($\ob$) plays the same role as happens-before in x86: it imposes the event-ordering constraints upon which all threads must agree, and must be free from cycles (\Order).
These constraints arise from communication ($\come$), dependencies ($\dob$), atomic RMWs ($\aob$), and barriers ($\bob$). 

\subsection{Adding Transactions}

The ARMv8 architecture does not support TM, so the extensions proposed below (highlighted in Fig.~\ref{fig:axioms_arm}) are unofficial.
Nonetheless, the extensions we give are based upon a proposal currently being considered within ARM Research and upon extensive conversations with ARM architects.

\begin{itemize}

\item \StrongIsol{} is a natural choice for hardware TM.

\item As in x86 and Power, we place implicit fences ($\tfence$) at the boundaries of successful transactions.

\item We bring the \TxnOrder{} axiom from x86 and Power to forbid $\ob$-cycles among transactions.

\item Like Power, ARMv8 has exclusive instructions, so it inherits the \TxnCancelsRMW{} axiom to ensure the failure of RMWs that straddle a transaction boundary.

\end{itemize}

\subsection{Empirical Testing}
\label{sec:arm:testing}

ARM hardware does not support TM so we cannot test our model as we did for x86 and Power.
However, we generated the {\bf Forbid} and {\bf Allow} suites anyway, and gave them to ARM architects.
They were able to use these to reveal a bug (specifically, a violation of the \TxnOrder{} axiom) in a register-transfer level (RTL) prototype implementation.

\newcommand\satxn{\mathit{stxn}_{\rm at}}

\newcommand\HbCom{\ax{HbCom}}
\newcommand\NoThinAir{\ax{NoThinAir}}
\newcommand\SeqCst{\ax{SeqCst}}
\newcommand\NoRace{\ax{NoRace}}

\section{Transactions in C++}
\label{sec:cpp}

We now turn our attention from hardware to software. 
TM is supported in C++ via an ISO technical specification that has been under development by the C++ TM Study Group since 2012~\cite{c++tm15, shpeisman+09}. 
In this section, we formalise how the proposed TM extensions interact with the existing C++ memory model, and detail a possible simplification to the specification. 

C++ TM offers two main types of transactions: \emph{relaxed transactions} (written \texttt{synchronized\{...\}}) can contain arbitrary code, but only promise weak isolation, while \emph{atomic transactions} (written \texttt{atomic\{...\}}) promise strong isolation but cannot contain certain operations, such as atomic operations~\cite[\S8.4.4]{c++tm15}.
Some atomic transactions can be aborted by the programmer, but we do not support these in this paper.

\subsection{Background: the C++ Memory Model}
\label{sec:cpp:background}

\newcommand\sw{\mathit{sw}}
\newcommand\cnf{\mathit{cnf}}
\newcommand\psc{\mathit{psc}}
\newcommand\ecom{\mathit{ecom}}

\newcommand\ACQ{\mathit{Acq}}
\newcommand\Ato{\mathit{Ato}}

\newcommand\tsw{\mathit{tsw}}

Our presentation of the baseline C++ memory model follows \citet{lahav+17}. 
We choose to build on their formalisation because it incorporates a fix that allows correct compilation to Power -- without this, we could not check the compilation of C++ transactions to Power transactions (\S\ref{sec:metatheory:compilation}).

C++ executions identify four additional subsets of events: $\Ato$ contains the events from atomic operations, while $\ACQ$, $\REL$, and $\SC$ contain events from atomic operations that use the acquire, release, and SC consistency modes~\cite[\S29.3]{c++11}.

\begin{figure}
\centering
\begin{axiomatisation}
\axiom{\HbCom}{\irreflexive(\hb \semi \com^*)} \\
\where \sw &=& \modelcomment{synchronises-with, elided} \\
\dashboxed[32mm]{}\ecom &=& \com \cup (\co\semi\rf) \\
\dashboxed[38mm]{}\tsw &=& \weaklift(\ecom,\stxn)\\
\hb &=& (\sw \cup \dashboxed{\tsw \cup {}}  \po)^+ \\
\axiom{\AtomicRMW}{\isempty(\rmw \cap (\fre\semi \coe))} \\
\axiom{\NoThinAir}{\acyclic(\po \cup \rf)} \\
\axiom{\SeqCst}{\acyclic(\psc)} \\ 
\where \psc &=& \modelcomment{constraints on SC events, elided} \\
\end{axiomatisation}
\vspace*{-3mm}
\begin{axiomatisation}
\axiom{\NoRace}{\isempty(\cnf \setminus \Ato^2 \setminus (\hb \cup \hb^{-1}))} \\
\where \cnf &=& \stack{((W\!\times\!W) \cup (R\!\times\!W) \cup
(W\!\times\!R)) \cap \sloc \setminus \id} \\
\end{axiomatisation}
\caption{C++ consistency and race-freedom
axioms~\cite{lahav+17}, with our TM additions
\dashboxed{\text{highlighted}}, and some details elided.}
\label{fig:axioms_cpp}
\end{figure}

Unlike the architecture-level memory models, the C++ memory model defines \emph{two} predicates on executions (Fig.~\ref{fig:axioms_cpp}). 
The first characterises the \emph{consistent} candidate executions. 
If any consistent execution violates a second \emph{race-freedom} predicate, then the program is completely undefined. 
Otherwise, the allowed executions are the consistent executions.

A C++ execution is consistent if it satisfies all of the consistency axioms given at the top of Fig.~\ref{fig:axioms_cpp} (ignoring highlighted regions for now).
The first, \HbCom{}, governs the happens-before relation, which is constructed from the program order and the \emph{synchronises-with} relation ($\sw$). 
Roughly speaking, an $\sw$ edge is induced when an acquire read observes a release write; but it also handles fences and the `release sequence'~\cite{lahav+17, batty+16}. 
The second axiom is standard for capturing the isolation of RMW operations.
The \NoThinAir{} axiom is \citeauthor{lahav+17}'s solution to C++'s `thin air' problem~\cite{batty+15}.
Finally, \SeqCst{} forbids certain cycles among $\SC$ events; we omit its definition as it does not interact with our TM extensions.

A consistent C++ execution is race-free if it satisfies the \NoRace{} axiom at the bottom of Fig.~\ref{fig:axioms_cpp}, which states that whenever two conflicting ($\cnf$) events in different threads are not both atomic, they must be ordered by happens-before. 

\subsection{Adding Transactions}
\label{sec:cpp:adding_transactions}

The specification for C++ TM makes two amendments to the C++ memory model: one for data races, and one for transactional synchronisation.

\paragraph{Transactions and Data Races}

The definition of a race is unchanged in the presence of TM. In particular, the program
\begin{center}
\begin{tabular}{l||l}
\texttt{atomic\{ x=1; \}} & \texttt{atomic\_store(\&x,2);}
\end{tabular}
\end{center}
is racy -- which is perhaps contrary to the intuition that an atomic transaction with a single non-atomic store should be interchangeable with a non-transactional atomic store.

\begin{Remark}
\label{rem:cpp_abort}
The specification also clarifies that although events in an unsuccessful transaction are unobservable, they can still participate in races. 
This implies that the program
\begin{center}
\begin{tabular}{l||l}
\texttt{atomic\{ x=1; abort(); \}} & \texttt{atomic\_store(\&x,2);}
\end{tabular}
\end{center}
must be considered racy. 
In our formalisation, transactions either succeed (giving rise to an $\stxn$-class) or fail, giving rise to no events (cf.~\S\ref{sec:transactions:executions}). 
This treatment correctly handles races involving unsuccessful transactions, because the race will be detected in the case where the transaction succeeds, but it cannot handle transactions that \emph{never} succeed, such as the one above. 
Therefore, we leave the handling of "abort()" for future work.
\end{Remark}

\paragraph{Transactional Synchronisation}
The second amendment by the C++ TM extension defines when two transactions synchronise~\cite[\S1.10]{c++tm15}.
An execution is deemed consistent only if there is a total order on transactions such that:
\begin{enumerate}

\item this order does not contradict happens-before, and

\item if transaction $T_1$ is ordered before conflicting transaction $T_2$, then the end of $T_1$ synchronises with the start of $T_2$.

\end{enumerate}

\newcommand\txnord{\mathit{to}}

We could incorporate these requirements into the formal model by extending executions with a transaction-ordering relation, $\txnord$, that serialises all the $\stxn$-classes in an order that does not contradict happens-before (point 1), and updating the synchronises-with relation to include events in conflicting transactions that are ordered by $\txnord$ (point 2).

However, this formulation is unsatisfying.
It is awkward that $\txnord$ is used to \emph{define} happens-before but is also forbidden to \emph{contradict} happens-before. 
Moreover, having the consistency predicate involve quantification over all possible transaction serialisations makes simulation expensive~\cite{batty+16}.

Fortunately, we can formulate the C++ TM memory model without relying on a total order over transactions. 
The idea is that if two transactions are in conflict, then their order can be deduced from the existing $\rf$, $\co$, and $\fr$ edges, and if they are not, then there is no need to order them.

In more detail, and with reference to the highlighted parts of Fig.~\ref{fig:axioms_cpp}: observe that whenever two events in an execution conflict ($\cnf$), they must be connected one way or the other by `extended communication' ($\ecom$), which is the communication relation extended with $
\co\semi\rf$ chains. 
That is, $\cnf = \ecom\cup\ecom^{-1}$~[\isabelleqed]. 
We then say that transactions synchronise with ($\tsw$) each other in $\ecom$ order,
and we extend happens-before to include $\tsw$.


By simply extending the definition of $\hb$ like this, we avoid the need for the $\txnord$ relation altogether, and we avoid adding any axioms to the consistency predicate. 
To make our proposal concrete, we provide
\ifdefined\EXTENDED
in \S\ref{sec:cpp_amendment}
\else
in our companion material
\fi
some text that the specification could incorporate (currently under review by the C++ TM Study Group).

\paragraph{Strong Isolation for Atomic Transactions}
The semantics described thus far provides the desired weak-isolating behaviour for \emph{relaxed} transactions; that is, the \WeakIsol{} axiom follows from the other C++ consistency axioms~[\isabelleqed].
However, \emph{atomic} transactions must be strongly isolated.
In fact, atomic transactions enjoy strong isolation simply by being forbidden to contain atomic operations. 
The idea is that for a non-transactional event to observe or affect a transaction's intermediate state, it must conflict with an event in that transaction. 
If this event cannot be atomic, there must be a race. 
Thus, for race-free programs, atomic transactions are guaranteed to be strongly isolated.

To formalise this property, we extend C++ executions with an $\satxn$ relation that identifies a subset of transactions as atomic. It satisfies $\satxn\subseteq\stxn$ and $(\satxn\semi\stxn)\subseteq\satxn$. We then prove the following theorem.

\begin{theorem}[Strong isolation for atomic transactions]
If \ax{NoRace} holds, and atomic transactions contain no atomic operations (i.e., $\domain(\satxn)\cap \Ato = \emptyset$), then \[\acyclic(\stronglift(\com,\satxn)).\]
\end{theorem}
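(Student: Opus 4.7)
The plan is to reduce any putative cycle in $\stronglift(\com,\satxn)$ to a cycle in $\hb\cup\com$, contradicting \HbCom. To do this, I would first establish a key lemma: any $\com$-edge $a\to_\com b$ with at least one endpoint in an atomic transaction satisfies $a\hb b$. The atomic-transaction endpoint is non-atomic by hypothesis, so $a$ and $b$ are not both atomic; since $\com$-edges connect conflicting accesses, if $a$ and $b$ lie in different threads then \NoRace{} forces them to be $\hb$-related, and if they lie in the same thread they are $\po$-related, hence $\hb$-related. The direction must agree with $\com$, because $b\hb a$ together with $a\to_\com b$ would make $\hb\semi\com^*$ reflexive, violating \HbCom.

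Next I would assume for contradiction a cycle in $\stronglift(\com,\satxn)$ and decompose it as an alternation of $\com$-edges and $\satxn$-links inside atomic transactions. For each atomic transaction $T$ the cycle passes through, there is an entry event $e_\mathrm{in}\in T$ (target of an incoming $\com$-edge) and an exit event $e_\mathrm{out}\in T$ (source of an outgoing $\com$-edge); since $T$ is $\po$-contiguous the two are $\po$-comparable. In the aligned subcase $e_\mathrm{in}\po^? e_\mathrm{out}$, the key lemma and transitivity of $\hb$ let me collapse the $T$-segment into a single $\hb$-hop. Doing this for every atomic transaction would turn the $\stronglift$-cycle into a cycle in $\hb\cup\com$, which \HbCom{} and the routine acyclicity of $\com$ (a standard consequence of the definitions of $\rf$, $\co$, and $\fr$) jointly rule out.

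The hard part is the reversed subcase $e_\mathrm{out}\po^+ e_\mathrm{in}$, in which the key-lemma $\hb$-edges point the wrong way through $T$. I plan to dispatch it by a subsidiary case-split on the events immediately adjacent to $T$ in the cycle. If they also lie in transactions, the $\tsw$ component of $\hb$, defined via $\weaklift(\ecom,\stxn)$, provides $\hb$-edges spanning \emph{entire} transactions in the $\ecom$-direction; the reversed traversal of $T$ then forces $\hb$ to be reflexive and thus contradicts \HbCom. If they are non-transactional, the only way \NoRace{} can supply the $\hb$-edges demanded by the key lemma is via $\po$ in the same thread, because atomic operations are barred inside atomic transactions and $\tsw$ does not cross a transaction boundary to a non-transactional event; the $\po$-contiguity of $T$ combined with the direction constraints imposed by \HbCom{} then rules the configuration out. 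This case analysis is the technical heart of the proof; the rest is bookkeeping.
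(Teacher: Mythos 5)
Your key lemma is essentially the paper's ($\com\setminus\Ato^2\subseteq\hb$, obtained from \NoRace{} with \HbCom{} fixing the direction), and your target --- an $\hb$-cycle contradicting \HbCom{} --- is also the paper's. But the final step of your reduction does not go through. After collapsing each atomic transaction you are left with ``a cycle in $\hb\cup\com$'' and you claim \HbCom{} plus acyclicity of $\com$ rules this out. It does not: \HbCom{} only forbids $\hb\semi\com^*$ from being \emph{reflexive}, and a cycle that alternates $\hb$-blocks with two or more separated $\com$-blocks is not of that shape (the store-buffering cycle $\po\semi\fr\semi\po\semi\fr$ on atomic locations is exactly such an $(\hb\cup\com)$-cycle and is allowed, race-free, in C++). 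Your key lemma converts only the transaction-\emph{adjacent} $\com$-edges into $\hb$; a $\com$-edge both of whose endpoints are non-transactional can have both endpoints in $\Ato$, so neither the lemma nor \NoRace{} applies to it, and once the cycle contains two such stretches the argument stalls. The ingredient you are missing is the paper's expansion $\com^+=\ecom\cup(\fr\semi\rf)$: each inter-transaction communication \emph{chain} is first collapsed to a single $\ecom$-edge or an $\fr\semi\rf$ pair whose endpoints are the transactional --- hence non-atomic --- events, and only then are \NoRace{} and the $\tsw=\weaklift(\ecom,\stxn)$ component of $\hb$ invoked.

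That transaction-level lifting is also why your ``reversed subcase'' never arises in the paper: $\tsw$ relates \emph{every} event of one transaction to \emph{every} event of another as soon as a single $\ecom$-edge crosses between them, so the positions of the entry and exit events in $\po$ are irrelevant. Your per-event chaining does have to care about them, and your patch is both incomplete and miscast. An $\hb$-path from a non-transactional event into $T$ need not end with a $\po$-edge in the same thread --- it can route through a third transaction and enter $T$ by a $\tsw$-edge --- so your case split misses a case; and the configuration $e_{\mathrm{out}}\mathbin{\po^+}e_{\mathrm{in}}$ is not itself contradictory (a $\com$-edge may enter a transaction at a late event and leave from an early one in a perfectly consistent execution), so there is nothing to ``rule out''. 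What you actually need, and would still have to state and prove, is that any $\hb$-edge entering or leaving a $\po$-contiguous, atomic-free transaction extends to every event of that transaction --- after which the reversed subcase dissolves rather than being excluded.
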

\begin{proof}[Proof sketch]
\renewcommand\qed{\hfill\isabelleqed}
A cycle in $\stronglift(\com,\satxn)$ is either a $\com$-cycle or an $r$-cycle, where $r = \satxn\semi (\com \setminus \satxn)^+ \semi \satxn$. 
From \ax{NoRace}, we have $\com \setminus \Ato^2 \subseteq \hb$. Using this and the expansion $\com^+ = \ecom \cup (\fr\semi\rf)$ we can obtain $r \subseteq \hb$.
To finish the proof, note that execution well-formedness forbids $\com$-cycles, and that $r$-cycles are forbidden too because they are also $\hb$-cycles, which violate \HbCom{}.
\end{proof}

\paragraph{A Transactional SC-DRF Guarantee}
\label{sec:cpp:tdrf}

A central property of the C++ memory model is its SC-DRF guarantee~\cite{adve+90, boehm+08}: all race-free C++ programs that avoid non-SC atomic operations enjoy SC semantics.
This guarantee can be lifted to a transactional setting~\cite{dalessandro+09, shpeisman+09}: all race-free C++ programs that avoid relaxed transactions and non-SC atomic operations enjoy TSC semantics (cf. \S\ref{sec:transactions:tsc}). 
This is formalised in the following theorem, which we prove
\ifdefined\EXTENDED
in \S\ref{sec:tdrf_proof}.
\else
in our companion material.
\fi

\newcommand\tdrfstatement{
If a C++-consistent execution has
\begin{itemize}
\item no relaxed transactions (i.e. $\stxn = \satxn$),
\item no non-SC atomics (i.e. $\Ato = \SC$), and
\item no data races (i.e. \NoRace{} holds),
\end{itemize}
then it is consistent under TSC.
}
\begin{theorem}[Transactional SC-DRF guarantee]
\label{thm:tdrf}
\tdrfstatement
\end{theorem}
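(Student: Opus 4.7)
My plan is to verify both TSC axioms of Fig.~\ref{fig:axioms_tsc} directly: \Order{}, namely $\acyclic(\po\cup\com)$, and \TxnOrder{}, namely $\acyclic(\stronglift(\po\cup\com,\stxn))$. Since \TxnOrder{} subsumes \StrongIsol{}, these together establish TSC consistency.

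For \Order{} I would appeal to a standard SC-DRF argument for C++ in the spirit of~\citet{lahav+17}. Under our hypotheses every inter-thread $\com$-edge with at least one non-atomic endpoint is lifted into $\hb$ by \NoRace{} (the pair is then not both atomic, and the direction of the resulting $\hb$-edge is pinned down by \HbCom{} given the accompanying $\com$-edge), every inter-thread $\com$-edge between two SC atomics is captured by $\psc$, and same-location $\po\cup\com$ cycles are killed by \HbCom{} via the coherence identity $\com^+ = \ecom\cup(\fr\semi\rf)$. Any putative $\po\cup\com$ cycle therefore collapses into a cycle forbidden by \HbCom{} or \SeqCst{}.

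For \TxnOrder{} the key structural ingredient is that $\stxn=\satxn$ together with the C++ TM spec's ban on atomic operations inside atomic transactions, which forces every transaction-internal event to be non-atomic. I would then show that every edge $(a,b)$ of $\stronglift(\po\cup\com,\stxn)$ lifted from a base $(x,y)\in(\po\cup\com)\setminus\stxn$ lies in $\hb$, so any cycle in $\stronglift(\po\cup\com,\stxn)$ is itself an $\hb$-cycle and contradicts \HbCom{}. The case split on the base edge runs as follows: a base $\po$-edge gives $(a,b)\in\po\subseteq\hb$ directly, using $\po$-contiguity of transactions; a base $\com$-edge between two transaction-internal events is absorbed into $\tsw=\weaklift(\ecom,\stxn)\subseteq\hb$ (since $\com\subseteq\ecom$); and a base $\com$-edge with at least one non-transactional endpoint triggers \NoRace{} on the conflicting pair (non-trivially, because every transaction-internal event is non-atomic), yielding an $\hb$-ordering that bridges via intra-transaction $\po$ to $(a,b)$.

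The main obstacle will be the mixed sub-case where the extended endpoint inside a transaction is $\po$-after the base endpoint, so the intra-transaction $\po$-step runs the ``wrong'' way relative to the outgoing base edge. Here bridging to $(a,b)$ via $\po$ is not immediate: I would either rephrase the bridging argument at the level of the cycle as a whole---showing that the wrong-direction step together with the remainder of the cycle closes a forbidden $\hb\semi\com^?$-loop ruled out by \HbCom{}---or quotient each transaction to a single point and reason on the quotient graph, in which an atomic transaction behaves as a single $\hb$-node with respect to external communication. Care is also needed with the direction of the $\hb$ returned by \NoRace{}, which I would pin down using the same \HbCom{}-based argument as in the \Order{} part. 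Either way, it is race-freedom combined with the non-atomicity of transaction-internal events that ultimately supplies the synchronisation needed to collapse every stronglifted cycle into an ambient C++ $\hb$-cycle.
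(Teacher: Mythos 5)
Your top-level decomposition (establish \Order{} and \TxnOrder{} for the TSC model, noting that \TxnOrder{} subsumes \StrongIsol{}) matches the paper's, and several ingredients are right: \NoRace{} lifts $\com$-edges with a non-atomic endpoint into $\hb$, $\tsw$ absorbs intra-transaction communication, and the ``wrong-direction'' intra-transaction $\po$-step is a genuine obstacle. But your central claim for \TxnOrder{} --- that every edge of $\stronglift(\po\cup\com,\stxn)$ lies in $\hb$, so that any cycle is an $\hb$-cycle killed by \HbCom{} --- is false. Because $\stronglift$ uses $\stxn^?$ on both sides, the lifted relation contains every plain $\co$ or $\fr$ edge between two non-transactional events; when both endpoints are ($\SC$) atomics, \NoRace{} says nothing about them, and such edges are in general \emph{not} in $\hb$ (an $\fr$-edge from an SC read to a $\co$-later SC write is the canonical example --- this is precisely why \SeqCst{} exists as a separate axiom). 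Your case ``base $\com$-edge with at least one non-transactional endpoint triggers \NoRace{}'' silently assumes one endpoint is non-atomic. The paper instead folds $\stxn$-steps into adjacent $\hb$-edges (its Lemma~\ref{lem:stxn_hb}: $\stxn^*\semi(\hb\setminus\stxn)\semi\stxn^*\subseteq\hb\setminus\stxn$, which is essentially your ``quotient'' idea made precise, and which itself needs the ban on atomics inside atomic transactions to rule out an SC $\rf$-edge leaving the middle of a transaction) and then re-runs the full \SeqCst{}-based segment analysis; there is no shortcut to a pure \HbCom{} contradiction.

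For \Order{}, the appeal to ``a standard SC-DRF argument'' hides the crux and is not available off the shelf. The technical heart is showing that, in a $\po\cup\com$ cycle segmented at its SC events, each $\hb$-chain between consecutive SC events that passes through non-SC events actually lands in $\psc^+$: $\psc$ does not contain arbitrary $\hb$-paths between SC events, only those entering and leaving the non-SC portion through $\po\setminus\sloc$ edges, and establishing those boundary edges requires dedicated race-freedom arguments against specific chords of the cycle (the paper's Lemmas~\ref{lem:YZ_psc} and~\ref{lem:XY_psc}). Moreover the argument cannot simply be cited from the non-transactional setting, because $\hb$ now contains $\tsw$ edges: the chain analysis must dispose of $\tsw$-edges adjacent to SC events (using $\stxn=\satxn$ and the no-atomics-in-atomic-transactions condition) and derive fresh races in the remaining $\tsw$ configurations. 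Without these two pieces the proof of \Order{} --- and hence, via the reduction above, of \TxnOrder{} --- does not close.
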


\newcommand\scs{\mathit{scr}}
\newcommand\scst{\mathit{scr}^{\rm t}}

\begin{table}
\caption{Summary of our metatheoretical results. Timings are for a machine with four 16-core Opteron processors and 128GB RAM, using the Plingeling solver~\cite{biere10}. A \greencross{} means the property holds up to the given number of events, a \redtick{} means a counterexample was found, and \timeout{} indicates a timeout.}
\label{tab:metatheory_results}
\centering
\renewcommand\tabcolsep{0.9mm}
\begin{tabular}{lll@{\hspace*{-4mm}}rrc}
\toprule
\bf Property & \bf \S & \bf Target & \bf Events & \bf ~~Time & \bf C'ex? \\
\midrule
Monotonicity & \ref{sec:metatheory:monotonicity} 
 & x86       & 6 & 20m 
                            & \greencross \\
&& Power     & 2 & <1s      & \redtick    \\
&& ARMv8     & 2 & <1s      & \redtick    \\
&& C++       & 6 & 91h 
                            & \greencross \\
Compilation & \ref{sec:metatheory:compilation} 
 & C++/x86   & 6 & 14h 
                            & \greencross \\
&& C++/Power & 6 & 16h 
                            & \greencross \\
&& C++/ARMv8 & 6 & 20h 
                            & \greencross \\
Lock elision & \ref{sec:metatheory:hle} 
 & x86       & 8 & >48h     & \timeout    \\
&& Power     & 9 & >48h     & \timeout    \\
&& ARMv8     & 7 & 63s      & \redtick    \\
&& ARMv8 (fixed) & 8 & >48h & \timeout    \\
\bottomrule
\end{tabular}
\end{table}

\section{Metatheory}
\label{sec:metatheory}

We now study several metatheoretical properties of our proposed models.
For instance, one straightforward but important property, which follows immediately from the model definitions, is that our TM models give the same semantics to transaction-free programs as the original models~[\isabelleqed].
In this section, we use \Memalloy{} to check some more interesting properties of our models, as summarised in Tab.~\ref{tab:metatheory_results}.


\subsection{Monotonicity}
\label{sec:metatheory:monotonicity}

\newcommand\piedge{\begin{tikzpicture}[baseline=0]
\draw[edgepi] (0,0.09) to (0.4,0.09);
\end{tikzpicture}}

We check that adding $\stxn$-edges can never make an inconsistent execution consistent. This implies that all of the following program transformations are sound: introducing a transaction (e.g., \begin{tikzpicture}[baseline=(a1.base)]
\node[inner sep=0] (a1) at (0,0) {$\bullet$};
\end{tikzpicture}\,\piedge\,\begin{tikzpicture}[baseline=(a2.base)]
\node(a2) at (0.9,0) {$\bullet$};
\node[stxn, fit=(a2), inner sep=0] {};
\end{tikzpicture}), enlarging a transaction (e.g., \begin{tikzpicture}[baseline=(a1.base)]
\node(a1) at (0,0) {$\bullet$};
\node[inner sep=0](b1) at (0.35,0) {$\bullet$};
\node[stxn, fit=(a1), inner sep=0] {};
\end{tikzpicture}\,\piedge\,\begin{tikzpicture}[baseline=(a2.base)]
\node(a2) at (1.2,0) {$\bullet$};
\node(b2) at (1.55,0) {$\bullet$};
\node[stxn, fit=(a2)(b2), inner sep=0] {};
\end{tikzpicture}), and coalescing two consecutive transactions (e.g., \begin{tikzpicture}[baseline=(a1.base)]
\node(a1) at (0,0) {$\bullet$};
\node(b1) at (0.5,0) {$\bullet$};
\node[stxn, fit=(a1), inner sep=0] {};
\node[stxn, fit=(b1), inner sep=0] {};
\end{tikzpicture}\,\piedge\,\begin{tikzpicture}[baseline=(a2.base)]
\node(a2) at (1.35,0) {$\bullet$};
\node(b2) at (1.7,0) {$\bullet$};
\node[stxn, fit=(a2)(b2), inner sep=0] {};
\end{tikzpicture}).

\Memalloy{} confirmed that the transactional x86 and C++ models enjoy this monotonicity property for all executions with up to 6 events. 
For Power and ARMv8, it found the following counterexample:
\begin{center}
\begin{tikzpicture}[inner sep=1pt, yscale=-1]
\def\srcposx{0}
\def\srcposy{0}
\def\vspacing{0.8}
\def\hspacing{1.8}
\node (a1) at (\srcposx,\srcposy+0*\vspacing) {$\evR{}{"x"}{0}$};
\node (a2) at (\srcposx,\srcposy+1*\vspacing) {$\evW{}{"x"}{1}$};

\node[stxn, fit=(a1)] {};
\node[stxn, fit=(a2)] {};

\def\tgtposx{3}
\def\tgtposy{0}
\def\vspacing{0.8}
\node (b1) at (\tgtposx,\tgtposy+0*\vspacing) {$\evR{}{"x"}{0}$};
\node (b2) at (\tgtposx,\tgtposy+1*\vspacing) {$\evW{}{"x"}{1}$};

\node[stxn, fit=(b1)(b2)] {};

\draw[edgepi] (a1) to (b1);
\draw[edgepi] (a2) to (b2);

\draw[edgepo] (a1) to [auto] node {$\rmw$} (a2);
\draw[edgepo] (b1) to [auto] node {$\rmw$} (b2);
\end{tikzpicture}
\end{center}
The left execution is inconsistent in both models because of the \TxnCancelsRMW{} axiom: that a store-exclusive separated from its corresponding load-exclusive by a transaction boundary always fails. 
The right execution, however, is consistent.
This counterexample implies that techniques that involve transaction coalescing~\cite{chung+08, stipic+13} must be applied with care in the presence of RMWs.

\subsection{Mapping C++ Transactions to Hardware}
\label{sec:metatheory:compilation}

We check that it is sound to compile C++ transactions to x86, Power, and ARMv8 transactions.
A realistic compiler would be more complex -- perhaps including fallback options for when hardware transactions fail -- but our direct mapping is nonetheless instructive for comparing the guarantees given to transactions in software and in hardware.

Specifically, we use \Memalloy{} to search for a pair of executions, $X$ and $Y$, such that $X$ is an inconsistent C++ execution, $Y$ is a consistent x86/Power/ARMv8 execution, and $X$ is related to $Y$ via the relevant compilation mapping, encoded in the relation $\pi$. 
Such a pair would demonstrate that the compilation mapping is invalid. 
\citet{wickerson+17} have encoded non-transactional compilation mappings; we only need to extend them to handle transactions, which we do by requiring $\pi$ to preserve all $\stxn$-edges:
\begin{eqnarray*}
\stxn_Y &=& \pi^{-1}\semi \stxn_X \semi \pi.
\end{eqnarray*}

\Memalloy{} confirmed that compilation to x86, Power, and ARMv8 is sound for all C++ executions with up to 6 events.

\subsection{Checking Lock Elision}
\label{sec:metatheory:hle}

We now check the soundness of lock elision in x86, Power, and ARMv8 using the technique proposed in \S\ref{sec:methodology:libabs}.

First, we extend executions with four new event types:
\begin{itemize}
\item $\evL$, the "lock()" calls that will be implemented by acquiring the lock in the ordinary fashion,
\item $\evU$, the corresponding "unlock()" calls,
\item $\evLt$, the "lock()" calls that will be transactionalised, and
\item $\evUt$, the corresponding "unlock()" calls.
\end{itemize}
When generating candidate executions from programs, we assume that each "lock()"/"unlock()" pair gives rise to a $L$-$U$ pair or a $\evLt$-$\evUt$ pair. 
(Distinguishing these two modes at the execution level eases the definition of the mapping relation.)
We obtain from these lock/unlock events a derived $\scs$ relation that forms an equivalence class among all the events in the same CR.
Similarly, $\scst$ is a subrelation of $\scs$ that comprises just those CRs that are to be transactionalised.

Second, we extend execution well-formedness so that every $\evL$ event must be followed by a $\evU$ event without an intervening $\evLt$ or $\evUt$, and so on.

Third, the consistency predicates from Figs.~\ref{fig:axioms_x86}, \ref{fig:axioms_power}, and~\ref{fig:axioms_arm} are extended with the following axiom that forces the serialisability of CRs.
\begin{align}
\tag{\ax{CROrder}}
\acyclic(\weaklift(\po\cup\com, \scs))
\end{align}

\begin{table}
\caption{Key constraints on $\pi$ for defining lock elision}
\label{tab:hle_mapping}
\centering
\renewcommand{\tabcolsep}{0.3mm}
\newcommand\xstart{6mm}
\newcommand\xend{5mm}
\newcommand\pstart{6mm}
\newcommand\pend{14mm}
\newcommand\astart{6mm}
\newcommand\aend{11mm}
\begin{tabular*}{\linewidth}{ccccc}
\toprule
\multirow{2}{*}{
\begin{tabular}[b]{@{}c@{}}\textbf{Source}\\\textbf{event}, $e$\end{tabular}}
    & \multicolumn{4}{c}{\textbf{Target event(s)}, $\pi(e)$} \\
\cmidrule(l){2-5}
    & x86 & Power & ARMv8 & ARMv8 (fixed)\\
\midrule
\arrayrulecolor{black!50}
$\evL$ & 
\begin{tikzpicture}[inner sep=1pt,yscale=-1, baseline=(b.base), trim left=-\xstart, trim right=\xend]
\node (c) at (0,-0.6) {$\evR{}{}{}$};
\node (a) at (0,0) {$\evR{}{}{}$};
\node (b) at (0,0.7) {$\evW{}{}{}$};
\draw[edgepo] (c) to[auto] node{$\ctrl$} (a);
\draw[edgepo] (a) to[auto] node{$\rmw$} (b);
\end{tikzpicture} & 
\begin{tikzpicture}[inner sep=1pt,yscale=-1, baseline=(b.base), trim left=-\pstart, trim right=\pend]
\node (a) at (0,0) {$\evR{}{}{}$};
\node (b) at (0,0.7) {$\evW{}{}{}$};
\node (c) at (0,1.4) {"isync"};
\draw[edgepo] (a) to[auto] node{$\rmw,\ctrl$} (b);
\draw[edgepo] (b) to[auto] node{$\ctrl$} (c);
\end{tikzpicture} & 
\begin{tikzpicture}[inner sep=1pt,yscale=-1, baseline=(b.base), trim left=-\astart, trim right=\aend]
\node (a) at (0,0) {\vphantom{$W$}\smash{$\evR{\Acq}{}{}$}};
\node (b) at (0,0.7) {$\evW{}{}{}$};
\draw[edgepo] (a) to[auto] node{$\rmw,\ctrl$} (b);
\end{tikzpicture} & 
\begin{tikzpicture}[inner sep=1pt,yscale=-1, baseline=(b.base), trim left=-\astart, trim right=\aend]
\node (a) at (0,0) {\vphantom{$W$}\smash{$\evR{\Acq}{}{}$}};
\node (b) at (0,0.7) {$\evW{}{}{}$};
\node (c) at (0,1.4) {"dmb"};
\draw[edgepo] (a) to[auto] node{$\rmw,\ctrl$} (b);
\draw[edgepo] (b) to[auto] node{$\po$} (c);
\end{tikzpicture}
\\ \midrule
$\evU$ & 
\begin{tikzpicture}[inner sep=1pt,yscale=-1, trim left=-\xstart, trim right=\xend]
\node (a) at (0,0) {$\evW{}{}{}$};
\end{tikzpicture} & 
\begin{tikzpicture}[inner sep=1pt,yscale=-1, baseline=(b.base), trim left=-\pstart, trim right=\pend]
\node (a) at (0,0) {"sync"};
\node (b) at (0,0.7) {$\evW{}{}{}$};
\draw[edgepo] (a) to[auto] node{$\po$} (b);
\end{tikzpicture} & 
\begin{tikzpicture}[inner sep=1pt,yscale=-1, trim left=-\astart, trim right=\aend]
\node (a) at (0,0) {\vphantom{$W$}\smash{$\evW{\Rel}{}{}$}};
\end{tikzpicture} & 
\begin{tikzpicture}[inner sep=1pt,yscale=-1, trim left=-\astart, trim right=\aend]
\node (a) at (0,0) {\vphantom{$W$}\smash{$\evW{\Rel}{}{}$}};
\end{tikzpicture}
\\ \midrule
$\evLt$ &
\begin{tikzpicture}[inner sep=1pt,yscale=-1, baseline=(b.base), trim left=-\xstart, trim right=\xend]
\node (b) at (0,0) {$\evR{}{}{}$};
\end{tikzpicture} &
\begin{tikzpicture}[inner sep=1pt,yscale=-1, baseline=(b.base), trim left=-\pstart, trim right=\pend]
\node (b) at (0,0) {$\evR{}{}{}$};
\end{tikzpicture} &
\begin{tikzpicture}[inner sep=1pt,yscale=-1, baseline=(b.base), trim left=-\astart, trim right=\aend]
\node (b) at (0,0) {$\evR{}{}{}$};
\end{tikzpicture} &
\begin{tikzpicture}[inner sep=1pt,yscale=-1, baseline=(b.base), trim left=-\astart, trim right=\aend]
\node (b) at (0,0) {$\evR{}{}{}$};
\end{tikzpicture}
\\ \midrule
$\evUt$ & - & - & - & - \\
\arrayrulecolor{black}
\bottomrule
\end{tabular*}

\begin{axiomatisationWithoutBox}
\header{Moreover:} \\
\axiom{LockVar}{\sloc_Y ~~=~~ I^2 \cup ((\neg I)^2 \cap (\pi^{-1}\semi \sloc_X\semi\pi))} \\
\where I = \pi(\evL\cup\evU\cup\evLt\cup\evUt) 
\\
\axiom{TxnIntro}{\scst\setminus (\neg\evUt)^2 ~~=~~ \pi\semi\stxn_Y\semi\pi^{-1}} 
\\
\axiom{TxnReadsLockFree}{\isempty([\evL]\semi \pi\semi\rf\semi\pi^{-1}\semi[\evLt])}
\\
\bottomrule
\end{axiomatisationWithoutBox}
\end{table}


Finally, we define a mapping $\pi$ from the events of an `abstract' execution $X$ to those of a `concrete' execution $Y$, that captures the implementation of lock elision.
Table~\ref{tab:hle_mapping} sketches the main constraints we impose on $\pi$ so that it captures lock elision for x86, Power, and ARMv8. 
It preserves all the execution structure except for lock/unlock events.
The \ax{LockVar} constraint imposes that all the reads/writes in $Y$ that are introduced by the mapping (call these $I$) access the same location (i.e., the lock variable) and that this location is not accessed elsewhere in $Y$. 
The \ax{TxnIntro} constraint imposes that events in the same transactionalised CR in $X$ become events in the same transaction in $Y$. 
The $\evL$ and $\evU$ events are mapped to sequences of events that represent the recommended spinlock implementation for each architecture. 
Each $\evL$ event maps to a successful RMW on the lock variable, which in ARMv8 is an acquire-RMW~\cite[\S K.9.3.1]{arm17}, in Power is followed by a control dependency\footnote{In Power, $\ctrl$ edges can begin at a store-exclusive~\cite{sarkar+12}.} and an "isync"~\cite[\S B.2.1.1]{power30}, and in x86 is preceded by an additional read (the `test-and-test-and-set' idiom)~\cite[\S8.10.6.1]{intel17}. 
Each $\evU$ event maps to a write on the lock variable, which in ARMv8 is a release-write~\cite[\S K.9.3.2]{arm17}, and in Power is preceded by a "sync"~\cite[\S B.2.2.1]{power30}. 
Each $\evLt$ event maps to a read of the lock variable. 
This read does not observe a write from an $\evL$ event (\ax{TxnReadsLockFree}), to ensure that it sees the lock as free.
Finally, $\evUt$ events vanish (because we do not have explicit events for beginning/ending transactions).

\begin{figure}
\centering
\vspace*{2mm}
\begin{tikzpicture}[inner sep=1pt, yscale=-1]
\def\srcposx{0}
\def\srcposy{0}
\def\vspacing{0.8}
\def\hspacing{1.8}
\node[colorlock] (a1) at (\srcposx,\srcposy+1*\vspacing) {$\evL$};
\node (a2) at (\srcposx,\srcposy+2*\vspacing) {$\evR{}{"x"}{0}$};
\node (a3) at (\srcposx,\srcposy+3*\vspacing) {$\evW{}{"x"}{2}$};
\node[colorlock] (a4) at (\srcposx,\srcposy+4*\vspacing) {$\evU$};
\node[colorlock] (b1) at (\srcposx+\hspacing,\srcposy+1*\vspacing) {$\evLt$};
\node (b2) at (\srcposx+\hspacing,\srcposy+2*\vspacing) {$\evW{}{"x"}{1}$};
\node[colorlock] (b3) at (\srcposx+\hspacing,\srcposy+3*\vspacing) {$\evUt$};

\def\tgtposx{5}
\def\tgtposy{1}
\def\vspacing{0.8}
\node[colorlock] (c1) at (\tgtposx,\tgtposy+0*\vspacing) {$\evR{\smash{\Acq}}{"m"}{0}$};
\node[colorlock] (c2) at (\tgtposx,\tgtposy+1*\vspacing) {$\evW{}{"m"}{1}$};
\node (c3) at (\tgtposx,\tgtposy+2*\vspacing) {$\evR{}{"x"}{0}$};
\node (c4) at (\tgtposx,\tgtposy+3*\vspacing) {$\evW{}{"x"}{2}$};
\node[colorlock] (c5) at (\tgtposx,\tgtposy+4*\vspacing) {$\evW{\Rel}{"m"}{0}$};
\node[colorlock] (d1) at (\tgtposx+\hspacing,\tgtposy+1*\vspacing) {$\evR{}{"m"}{0}$};
\node (d2) at (\tgtposx+\hspacing,\tgtposy+2*\vspacing) {$\evW{}{"x"}{1}$};

\node[stxn, fit=(d1)(d2)] {};

\draw[edgepi,overlay] (a1) to[bend right=15] (c1);
\draw[edgepi] (a1) to (c2);
\draw[edgepi] (a2) to (c3);
\draw[edgepi] (a3) to (c4);
\draw[edgepi] (a4) to (c5);
\draw[edgepi] (b1) to (d1);
\draw[edgepi] (b2) to (d2);

\draw[edgefr] (a2) to [auto] node {$\fr$} (b2);
\draw[edgeco] (b2) to [auto, bend right=20] node {$\co$} (a3);
\draw[edgepo] (a1) to [auto] node {$\po$} (a2);
\draw[edgepo] (a2) to [auto] node {$\po,\data$} (a3);
\draw[edgepo] (a3) to [auto] node {$\po$} (a4);
\draw[edgepo] (b1) to [auto] node {$\po$} (b2);
\draw[edgepo] (b2) to [auto] node {$\po$} (b3);

\draw[edgepo] (c1) to [pos=0.4, auto] node {$\po, \rmw, \ctrl$} (c2);
\draw[edgepo] (c2) to [auto] node {$\po$} (c3);
\draw[edgepo] (c3) to [auto] node {$\po,\data$} (c4);
\draw[edgepo] (c4) to [auto] node {$\po$} (c5);
\draw[edgepo] (d1) to [auto] node {$\po$} (d2);
\draw[edgefr] (c3) to [auto] node {$\fr$} (d2);
\draw[edgefr] (d1) to [auto] node {$\fr$} (c2);
\draw[edgeco] (d2) to [auto, bend right=20] node {$\co$} (c4);
\draw[edgeco, overlay] ([xshift=-4mm]c2.south) to [auto, swap, bend left] node {$\co$} ([xshift=-4mm]c5.north);
\draw[edgefr] (c1) to [auto, swap, bend left] node {$\fr$} (c2);

\end{tikzpicture}

\caption{A pair of executions that demonstrates lock elision being unsound in ARMv8}
\label{fig:hle_bug}
\end{figure}

Figure~\ref{fig:hle_bug} shows a pair of ARMv8 executions, $X$ (left) and $Y$ (right), and a $\pi$ relation (dotted arrows), that satisfy all of the constraints above. 
From this example, which was automatically generated using \Memalloy{} in 63 seconds, we manually constructed the pair of litmus tests shown in Example~\ref{ex:hle}.
It thus demonstrates that lock elision is unsound in ARMv8.
This example is actually one of several found by \Memalloy{}; we provide another example
\ifdefined\EXTENDED
in \S\ref{sec:second_lock_elision_example}.
\else
in our companion material.
\fi

We also used \Memalloy{} to check lock elision in x86 and Power, and again in ARMv8 after applying the fix proposed in \S\ref{sec:intro:hlebug} (appending a "DMB" to the "lock()" implementation). 
Given that each architecture implements $\evL$ events with a different number of primitive events (Tab.~\ref{tab:hle_mapping}), we ensured that the event count was large enough in each case to allow examples similar to Fig.~\ref{fig:hle_bug} to be found.
We were unable to find bugs in any of these cases, but \Memalloy{} timed out before it could verify their absence.
As such, we cannot claim lock elision in x86 and Power to be \emph{verified}, but the timeout provides a high degree of confidence that these designs are bug-free up to the given bounds because, as Tab.~\ref{tab:metatheory_results} shows, when counterexamples exist they tend to be found quickly.


\section{Related Work}
\label{sec:related}



In concurrent but independent work, \citet{dongol+18} have also proposed adding TM to the x86, Power, and ARMv8 memory models.
Like us, \citeauthor{dongol+18} build their axioms by lifting relations from events to transactions.
However, their models are significantly weaker than ours, because they capture only the \emph{atomicity} of transactions, not the \emph{ordering} of transactions.
Because of this, their Power model is too weak to validate the natural compiler mapping from C++.
This is demonstrated by the following execution, which is forbidden by C++ (owing to an $\hb$ cycle), but allowed by their Power model (though not actually observable on hardware).
\begin{center}
\begin{tikzpicture}[inner sep=1pt, yscale=-1]
\def\srcposx{0}
\def\srcposy{0}
\def\vspacing{0.9}
\def\hspacing{1.5}

\node (b1) at (\srcposx+\hspacing, \srcposy+1*\vspacing)
  {$\evW{}{"x"}{1}$};
\node (c1) at (\srcposx+\hspacing, \srcposy+2*\vspacing)
  {$\evW{}{"y"}{1}$};  
\node (d1) at (\srcposx+2*\hspacing, \srcposy+1*\vspacing)
  {$\evR{}{"y"}{1}$};
\node (e1) at (\srcposx+2*\hspacing, \srcposy+2*\vspacing)
  {$\evR{}{"x"}{0}$};

\draw[edgepo] (b1) to [auto,swap] node (po) {$\po$} (c1);

\node[stxn, fit=(b1)(c1)(po)] {};
\node[stxn, fit=(d1)] {};
\node[stxn, fit=(e1)] {};

\draw[edgerf] (c1) to [auto,swap,pos=0.3] node {$\rf$} (d1);
\draw[edgepo] (d1) to [auto] node {$\po$} (e1);
\draw[edgefr] (e1) to [auto,swap,pos=0.7] node {$\fr$} (b1);

\end{tikzpicture}
\end{center}
Moreover, unlike our work, \citeauthor{dongol+18}'s models have not been empirically validated -- and nor have earlier models that combine TM and weak memory~\cite{maessen+07, dalessandro+10}.
Nonetheless, our models being stronger than \citeauthor{dongol+18}'s implies that our endeavours are complementary: our experiments validate their models, and their proofs carry over to our models.

\citet{cerone+15} have studied the weak consistency guarantees provided by transactions in database systems.
A key difference is that for \citeauthor{cerone+15}, weak behaviours are attributed to weakly consistent transactions, but in our work, weak behaviours are attributed to weakly consistent non-transactional events surrounding strongly consistent transactions. 
Nonetheless, similar axiomatisations can be used in both settings, and similar weak behaviours can manifest. 

Our models follow the axiomatic style, but there also exist operational memory models for x86~\cite{owens+09}, Power~\cite{sarkar+11}, ARMv8~\cite{pulte+17}, and C++~\cite{nienhuis+16}. It would be interesting to consider how these could be extended to handle TM.

Other architectures and languages that could be targetted by our methodology include RISC-V, which plans to incorporate TM in the future~\cite{riscv17}, and Java.
Indeed, \citet{grossman+06} and \citet{shpeisman+07} identify several tricky corner cases that arise when attempting to extend Java's weak memory model to handle transactions, and our methodology can be seen as a way to automate the generation of these.



Regarding the analysis of programs that \emph{provide} TM, an automatic tool for testing (software) TM implementations above a weak memory model has been developed by \citet{manovit+06}. 
Like us, they use automatically-generated litmus tests to probe the implementations, but where our test suites are constructed to be exhaustive and to contain only `interesting' tests, their tests are randomly generated.
Regarding the analysis of programs that \emph{use} TM, we note that the formulation of the C++ memory model by \citet{lahav+17} leads to an efficient model checker for multithreaded C++~\citet{kokologiannakis+18}.
Since our C++ TM model builds on \citeauthor{lahav+17}'s model, it may be possible to get a model checker for C++ TM similarly.




Regarding tooling for axiomatic memory models in general: our methodology builds on tools due to \citet{wickerson+17} and \citet{lustig+17}, both of which build on Alloy~\cite{jackson12a}. 
Related tools include \Diy{}~\cite{alglave+10}, which generates litmus tests by enumerating relaxations of SC. Compared to \Diy{}, \Memalloy{} is more easily extensible with constructs such as transactions, and only generates the tests needed to validate a model.
\MemSynth{}~\cite{bornholt+17} can synthesise memory models from a corpus of litmus tests and their expected outcomes, though it does not currently handle software models or control dependencies.

\section{Conclusion}

We have extended axiomatic memory models for x86, Power, ARMv8, and C++ to support transactions. 
Using our extensions to \Memalloy{}, we synthesised meaningful sets of litmus tests that precisely capture the subtle interactions between weak memory and transactions. 
These tests allowed us to validate our new models by running them on available hardware, discussing them with architects, and checking them against technical manuals.
We also used \Memalloy{} to check several metatheoretical properties of our models, including the validity of program transformations and compiler mappings, and the correctness -- or lack thereof -- of lock elision.

\specialcomment{acknowledgements}{%
  \begingroup
  \section*{Acknowledgements}
  \phantomsection\addcontentsline{toc}{section}{Acknowledgements}
}{%
  \endgroup
}

\begin{acknowledgements}
We are grateful to Stephan Diestelhorst, Matt Horsnell, and Grigorios Magklis for extensive discussions of TM and the ARM architecture,
to Nizamudheen Ahmed and Vishwanath HV for RTL testing, and
to Peter Sewell for letting us access his Power machine.
We thank the following people for their insightful comments on various drafts of this work:
Mark Batty,
Andrea Cerone,
George Constantinides, 
Stephen Dolan,
Alastair Donaldson,
Brijesh Dongol,
Hugues Evrard,
Shaked Flur,
Graham Hazel,
Radha Jagadeesan,
Jan Ko\'nczak,
Dominic Mulligan,
Christopher Pulte,
Alastair Reid,
James Riely,
the anonymous reviewers,
and our shepherd, Julian Dolby.
This work was supported by
an \grantsponsor{john-icrf}{Imperial College Research Fellowship}{http://www.imperial.ac.uk/research-fellowships/} and
the \grantsponsor{epsrc}{EPSRC}{https://www.epsrc.ac.uk/} (\grantnum{epsrc}{EP/K034448/1}).
\end{acknowledgements}

\balance
\bibliography{bibfile}

\ifdefined\EXTENDED
\appendix

\section{Proposed Amendment to the Transactional C++ Specification}
\label{sec:cpp_amendment}

\subsection{Original Text} The original text is as follows~\cite[\S1.10]{c++tm15}:

\begin{enumerate}
\item There is a global total order of execution for all outer blocks. If, in that total order, $T_1$ is ordered before $T_2$,
\begin{itemize}

\item no evaluation in $T_2$ happens before any evaluation in $T_1$ and

\item if $T_1$ and $T_2$ perform conflicting expression evaluations, then the end of $T_1$ synchronizes with the start of $T_2$.

\end{itemize}
\end{enumerate}

\subsection{Proposed Text}  To accommodate the proposal from \S\ref{sec:cpp}, we propose the following replacement text:

\begin{enumerate}

\item An operation $A$ \emph{communicates to} a memory operation $B$ on the same object $M$ if:

\begin{itemize} 

\item $A$ and $B$ are both side effects and $A$ precedes $B$ in the modification order of $M$;

\item $A$ is a side effect, $B$ is a value computation, and the value computed by $B$ is the value stored either by $A$ or by another side effect $C$ that follows $A$ in the modification order of $M$; or

\item $A$ is a value computation, $B$ is a side effect, and $B$ follows in the modification order of $M$ the side effect that stored the value computed by $A$.

\end{itemize}

\item The end of outer transaction $T_1$ synchronises with the start of outer transaction $T_2$ if an operation in $T_1$ communicates to an operation in $T_2$.

\end{enumerate}

\section{A Second Example of Lock Elision Being Unsound in ARMv8}
\label{sec:second_lock_elision_example}

{
\renewcommand\tabcolsep{0.4mm}
\renewcommand\arraystretch{0.9}

\newcommand\xrightbrace[2][1]{%
\def\mylineheight{0.35}%
\raisebox{2.1mm}{%
\smash{%
\begin{tikzpicture}[baseline=(top)]
\coordinate (top) at (0,#1*\mylineheight-0.06);
\coordinate (bottom) at (0,0);
\draw[colorcomment, pen colour={colorcomment}, decoration={calligraphic brace,amplitude=2.1pt}, decorate, line width=1pt] 
  (top) to node[auto, inner sep=0] {~~\begin{tabular}{l}#2\end{tabular}} (bottom);
\end{tikzpicture}}}}

\newcommand\lc[1]{\textcolor{colorlock}{#1}}

Memalloy found a second example of lock elision being unsound in ARMv8.
It is a variant of Example~\ref{ex:hle}, in which rather than an external store interrupting a read-modify-write operation, we have an external load observing an intermediate write.
Specifically, the program below must never satisfy the given postcondition:
\begin{center}
\small
\begin{tabular}{@{}ll@{\hspace{1mm}}||@{\hspace{1mm}}ll@{}}
\hline
\multicolumn{4}{c}{Initially: $"[X0]"=x=0$}                     \\
\hline
\lc{"lock()"}      &           & \lc{"lock()"}     &          \\
    "MOV W5,\#1"   & \xrightbrace[4]{store \\ twice \\ to $x$} 
                               & "LDR W7,[X0]"  & 
                              \xrightbrace[1]{load $x$} \\
    "STR W5,[X0]" &            &   \lc{"unlock()"}  &          \\
    "MOV W5,\#2"  &            &   &          \\
    "STR W5,[X0]" &            &   &          \\
\lc{"unlock()"}   &                                           \\
\hline
\multicolumn{4}{c}{Test: $"W7"=1$}                                 \\
\hline
\end{tabular}
\end{center}
However, if the left thread executes its CR non-transactionally while the right thread uses lock elision, then the resultant program \emph{can} satisfy that postcondition:
\begin{center}
\small
\begin{tabular}{@{}lll@{\hspace{1mm}}||@{\hspace{1mm}}lll@{}}
\hline
\multicolumn{6}{c}{Initially: $"[X0]"=x=0$, $"[X1]"=m=0$}    \\
\hline
\blacknum[2]{1}& \lc{"Loop:"}        & 
 \xrightbrace[6]{atomically \\ update $m$ \\ from 0 \\ to 1} & 
\blacknum[6]{3}& \lc{"TXBEGIN"}   & 
 \xrightbrace{begin txn}                                     \\
               & \lc{"LDAXR W2,[X1]"}&                       & 
               & \lc{"LDR W6,[X1]"}  & 
 \xrightbrace[4]{load $m$ \\ and abort \\ if non-\\ zero}        \\
               & \lc{"CBNZ W2,Loop"} &                       & 
               & \lc{"CBZ W6,L1"}    &                       \\
\blacknum[2]{4}& \lc{"MOV W3,\#1"}   &                       & 
               & \lc{"TXABORT"}      &                       \\
               & \lc{"STXR W4,W3,[X1]"} &                    & 
               & \lc{"L1:"}          &                       \\
               & \lc{"CBNZ W4,Loop"} &                       & 
               &    "LDR W7,[X0]"     & 
 \xrightbrace[1]{load $x$}                           \\
\blacknum[1]{2}   &    "MOV W5,\#1"    & 
 \xrightbrace[4]{store \\ twice \\ to $x$}                         & 
               & \lc{"TXEND"}        &
               \xrightbrace{end txn} \\
               &    "STR W5,[X0]"    &                       & 
               &                     &                       \\
\blacknum[2]{5}&    "MOV W5,\#2"  &                       & 
               &                     & 
                                                     \\
               &    "STR W5,[X0]"    &                       & 
               &                     &                       \\
               &\lc{"STLR WZR,[X1]"} & 
\xrightbrace{$m \leftarrow 0$}                               &      
                                     &                       \\
\hline
\multicolumn{6}{c}{Test: $"W7"=1$}                              \\
\hline
\end{tabular}
\end{center}
As in Example~\ref{ex:hle}, the numbers next to the instructions indicate the order in which they can execute in order to satisfy the postcondition.
This example is interesting because it shows that not only can \emph{loads} be executed speculatively before a successful store-exclusive ("STXR") completes, but so can \emph{stores}.
}

\section{Proof of Theorem~\ref{thm:tdrf}}
\label{sec:tdrf_proof}

{
\renewcommand\thetheorem{\ref{thm:tdrf}}
\begin{theorem}[Transactional SC-DRF guarantee]
\tdrfstatement
\end{theorem}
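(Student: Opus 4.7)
The plan is to reduce transactional SC-DRF to the standard (non-transactional) C++ SC-DRF theorem by showing that, under the three hypotheses, every order needed for TSC-consistency is already witnessed by C++ happens-before ($\hb$) and the $\psc$ relation.

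First I would establish a key classification of $\com$ edges. Since $\Ato = \SC$, any $\com$ edge $(a,b)$ either has both endpoints in $\SC$ --- in which case \SeqCst{} orders them inside $\psc$ --- or has at least one non-atomic endpoint, in which case \NoRace{} forces $\hb$-orderedness between $a$ and $b$, and \HbCom{} rules out the $\hb^{-1}$ direction, yielding $(a,b) \in \hb$. Same-thread $\com$ edges are forced to align with $\po \subseteq \hb$ by coherence. Hence $\com \subseteq \hb \cup \psc$, and the standard argument using \HbCom{} and \SeqCst{} yields \Order{}, namely $\acyclic(\po \cup \com)$, in the usual way.

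Second, for \TxnOrder{} --- i.e., $\acyclic(\stronglift(\po \cup \com, \stxn))$ --- I would exploit that $\stxn = \satxn$ forces every transactional event to be non-atomic, since atomic transactions contain no atomic operations by the C++ TM specification (giving $\domain(\satxn)\cap\Ato = \emptyset$). Therefore any $\com$ edge touching a transactional event has a non-atomic endpoint, so \NoRace{} again puts it in $\hb$. Moreover, $\hb$ already incorporates $\tsw = \weaklift(\ecom, \stxn)$, which means that whenever two distinct transactions are $\com$-related, \emph{all} of their events become mutually $\hb$-ordered; the "collapsing through $\stxn^?$" inherent in the strong lift is therefore already mirrored inside $\hb$, and $\po$ edges crossing transactions are trivially in $\hb$.

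The final step is to argue that a hypothetical cycle in $\stronglift(\po \cup \com, \stxn)$ would, after the classifications above, lift to either an $\hb$-cycle (contradicting \HbCom{}) or a $\psc$-cycle (contradicting \SeqCst{}). The main obstacle is the interleaving of $\hb$ and $\psc$ edges around such a cycle: $\psc$ only orders SC events, and one must show that every $\psc$-segment can be spliced into the global $\hb$-order without creating new reversals. This is essentially the same subtlety that arises in the standard C++ SC-DRF proof, and I would follow \citet{lahav+17} in constructing a single total order extending both relations and deriving the contradiction there. Theorem 1 (strong isolation for atomic transactions), applied with $\satxn = \stxn$, serves as a convenient auxiliary lemma for discharging cycle fragments that detour into and out of a single transaction.
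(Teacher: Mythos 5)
Your skeleton does match the paper's proof at the top level: use \NoRace{} to push every $\com$ edge with a non-atomic endpoint into $\hb$, note that events inside atomic transactions are non-atomic, and try to convert any offending cycle into an $\hb$-cycle (contradicting \HbCom{}) or a $\psc$-cycle (contradicting \SeqCst{}). The gap is that you defer the decisive step to ``the standard argument'' of \citet{lahav+17}, and that deferral does not go through. The transactional model enlarges $\hb$ with $\tsw=\weaklift(\ecom,\stxn)$, which makes \NoRace{} strictly \emph{weaker} (more conflicting pairs become $\hb$-ordered, so fewer pairs count as races); an execution that is race-free and consistent here need not be race-free in the base model, so the non-transactional SC-DRF theorem cannot be invoked as a black box -- its proof must be redone with $\tsw$ threaded through. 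Relatedly, $\po\cup\com\subseteq\hb\cup\psc$ does not by itself give acyclicity of the union, since $\hb$ and $\psc$ are each only individually acyclic. The paper's proof spends most of its effort exactly here: it splits a hypothetical cycle at SC events, isolates the $(\co\cup\fr)\setminus\hb$ segments, and shows that the $\hb$-chain joining two such segments lies in $\psc^+$ by locating suitable $\po\setminus\sloc$ edges and using the clause $[\SC]\semi(\po\setminus\sloc)\semi\hb\semi(\po\setminus\sloc)\semi[\SC]\subseteq\psc$. The delicate cases are precisely those where that chain enters or leaves through a $\tsw$ edge, and each is discharged by a bespoke data-race argument (exhibiting a write among the $\ecom$-related transactional events that races with an endpoint of a neighbouring $\co\cup\fr$ edge). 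Your proposal never engages with these cases, which are the new content of the theorem relative to the non-transactional result.

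For the \TxnOrder{} half, your claim that the $\stxn^?$-collapsing of the strong lift ``is already mirrored inside $\hb$'' via $\tsw$ is imprecise: $\tsw$ is a \emph{weak} lift, so it says nothing about edges whose source or target is non-transactional, which is exactly what $\stronglift$ adds beyond $\weaklift$. The paper instead proves a dedicated absorption lemma, $\stxn^*\semi(\hb\setminus\stxn)\semi\stxn^*\subseteq\hb\setminus\stxn$, whose proof needs the simplified form $\hb=(\po\cup(\rf\cap\SC^2)\cup\tsw)^+$, the $\po$-contiguity of transactions, and the absence of atomics inside them. Finally, the strong-isolation theorem you invoke concerns $\stronglift(\com,\satxn)$ only, not $\stronglift(\po\cup\com,\stxn)$, so it does not discharge cycle fragments containing $\po$ edges and is not the auxiliary lemma you need here.
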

}
\addtocounter{theorem}{-1}

\newcommand\pocom{\mathit{pocom}}
\newcommand\poloc{\LOC{\po}}
\newcommand\podloc{\DLOC{\po}}
\newcommand\rfsc{\rf^\SC}
\newcommand\polocsc{\po_{\mathrm{loc}}^\SC}

\let\oldxrightarrow\xrightarrow
\renewcommand\xrightarrow[1]{\oldxrightarrow{\mbox{\scriptsize $#1$}}}

In order to prove this theorem, let us assume the three conditions that the theorem assumes, and show that $\acyclic(\po\cup\com)$ and $\acyclic(\stronglift(\po\cup\com, \stxn))$ both hold.
In what follows, we write $\pocom$ for $\po\cup\com$, $\poloc$ for $\po\cap\sloc$, $\podloc$ for $\po\setminus\sloc$, $\polocsc$ for $\poloc\cap\SC^2$, and $\rfsc$ for $\rf \cap \SC^2$.

\begin{lemma}
\label{lem:nonatomic_com_hb}
In race-free executions, communication (other than that between two SC events) induces happens-before; i.e.,
\begin{eqnarray*}
\com \setminus \SC^2 &\subseteq& \hb.
\end{eqnarray*}
\end{lemma}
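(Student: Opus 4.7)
The plan is to unfold $\com$ into its three constituents and then combine \NoRace{} with \HbCom{} to force each non-SC communication edge into $\hb$.

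First I would observe that every edge $(a,b)\in\com$ lies in $\cnf$: whether the edge comes from $\rf$, $\co$, or $\fr$, the two endpoints access the same location, at least one of them is a write, and $a\neq b$, which exactly matches the definition of $\cnf$ from Fig.~\ref{fig:axioms_cpp}.

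Next, since the lemma is applied inside the proof of Theorem~\ref{thm:tdrf} where $\Ato=\SC$, we have $\Ato^2=\SC^2$, so the hypothesis $(a,b)\notin\SC^2$ gives $(a,b)\in\cnf\setminus\Ato^2$. The \NoRace{} axiom then yields $(a,b)\in\hb$ or $(b,a)\in\hb$.

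The key remaining step is excluding the reverse direction: if $(b,a)\in\hb$ held, then combined with $(a,b)\in\com\subseteq\com^*$ we would obtain $(b,b)\in\hb\semi\com^*$, contradicting the irreflexivity asserted by \HbCom{}. Hence $(a,b)\in\hb$, as required.

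I expect no real obstacle here: the argument is the standard ``a non-race between conflicting accesses forces $\hb$ to agree in direction with $\com$'' pattern, and both the inter-thread and intra-thread cases are handled uniformly by the same $\hb\semi\com^*$ cycle argument. The only delicate point worth flagging is that collapsing $\SC^2$ to $\Ato^2$ genuinely relies on the theorem's hypothesis $\Ato=\SC$; without it the lemma would be false, since conflicting relaxed atomics can legitimately fail to be $\hb$-ordered.
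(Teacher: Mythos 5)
Your proof is correct and follows essentially the same route as the paper's: \NoRace{} supplies $(a,b)\in\hb\cup\hb^{-1}$, and the irreflexivity of $\hb\semi\com^*$ from \HbCom{} excludes the reverse direction. The only cosmetic difference is that the paper splits into an intra-thread case (where $\po$-totality gives the $\hb$ edge directly) and an inter-thread case, whereas you handle both uniformly via \NoRace{}; since the formal \NoRace{} axiom is not restricted to inter-thread pairs and $\po\subseteq\hb$ in any case, your uniform treatment is sound, and your explicit observations that $\com\subseteq\cnf$ and that the hypothesis $\Ato=\SC$ is what lets $\SC^2$ be replaced by $\Ato^2$ are details the paper leaves implicit.
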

\begin{proof}
Consider two non-SC events related by $\com$. If they are in the same thread, then \HbCom{} guarantees that they are in $\po$, and hence in $\hb$. If they are in different threads, then \NoRace{} and \HbCom{} guarantee that they are in $\hb$.
\end{proof}

\begin{lemma}
\label{lem:simpler_hb}
In the absence of non-SC atomics, we can simplify happens-before as follows:
\begin{eqnarray*}
\hb &=& (\po \cup \rfsc \cup \tsw)^+.
\end{eqnarray*}
\end{lemma}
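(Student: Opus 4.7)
The plan is to prove the two inclusions separately and reduce each to a single nontrivial containment involving $\sw$.

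For $(\po \cup \rfsc \cup \tsw)^+ \subseteq \hb$, I would unfold the definition $\hb = (\sw \cup \tsw \cup \po)^+$ from Fig.~\ref{fig:axioms_cpp} and observe that $\po$ and $\tsw$ sit inside $\hb$ by construction. The only work is showing $\rfsc \subseteq \hb$: every event in $\SC$ acts simultaneously as a release write and an acquire read, so any $\rf$-edge between two SC events induces a synchronises-with edge, hence lies in $\sw \subseteq \hb$.

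For $\hb \subseteq (\po \cup \rfsc \cup \tsw)^+$, the inclusions $\po \subseteq$ RHS and $\tsw \subseteq$ RHS are immediate, so the task reduces to showing $\sw \subseteq (\po \cup \rfsc)^+$. Here I would use the standard structural description of $\sw$ inherited from~\cite{lahav+17}: every $\sw$-edge has the form $\po^?\semi \rf \semi \po^?$, where the central $\rf$-edge connects a release write to an acquire read (possibly via a release sequence or release/acquire fence, both of which only contribute additional $\po$-steps). Since release and acquire modes are confined to atomic events, and the hypothesis $\Ato = \SC$ rules out any non-SC atomics, the two endpoints of that central $\rf$-edge must lie in $\SC$, placing the edge in $\rfsc$. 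Composing this with the surrounding $\po^?$ pieces yields $\sw \subseteq (\po \cup \rfsc)^+$ as required.

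The main obstacle is the second inclusion, and specifically the reliance on the shape of $\sw$, whose definition is elided in Fig.~\ref{fig:axioms_cpp}. To make the argument airtight I would first state a lemma that factors $\sw$ through an $\rf$-edge whose endpoints are a release write and an acquire read, then invoke $\Ato = \SC$ to upgrade these endpoints to SC; the release-sequence and fence cases are handled uniformly because in each the $\rf$-edge occurs between atomic events of release/acquire flavour, and the remaining connective tissue is $\po$. Transitive closure on both sides preserves the inclusion, giving the claimed equality.
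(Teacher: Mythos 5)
Your proposal is correct and takes essentially the same route as the paper's proof, which likewise reduces everything to the single containment on $\sw$: it unfolds the definition of $\sw$ from \citet{lahav+17} and uses $\Ato=\SC$ to turn every $\rf$-edge occurring in it into an $\rfsc$-edge, obtaining $\sw \subseteq \po^*\semi(\rfsc\semi\po)^*\semi\rfsc\semi\po^*$. One small correction to your structural claim: the release sequence contributes a $(\rf\semi\rmw)^*$ chain rather than only $\po$-steps, so an $\sw$-edge may contain several $\rf$-edges rather than a single central one --- but this is harmless, since each such $\rf$-edge lies between atomic (hence, by hypothesis, SC) events and $\rmw\subseteq\po$, so your target inclusion $\sw\subseteq(\po\cup\rfsc)^+$ still goes through.
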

\begin{proof}
Recall that happens-before is defined via $\hb = (\po \cup \sw \cup \tsw)^+$, where 
\begin{eqnarray*}
\sw &=& \stack{[\Acq]\semi ([F]\semi\po)^?\semi [W]\semi \poloc^*\semi [W\cap \Ato]\semi {}\\ (\rf\semi\rmw)^*\semi\rf\semi[R\cap \Ato]\semi(po\semi[F])^?\semi[\Rel].}
\end{eqnarray*}
In the absence of non-SC atomics, this simplifies to
\begin{eqnarray*}
\sw &\subseteq& \po^*\semi (\rfsc\semi\po)^*\semi\rfsc\semi\po^*
\end{eqnarray*}
from which the result follows.
\end{proof}

\subsection*{Proof of $\acyclic(\pocom)$}

Suppose toward a contradiction that there is a $\pocom$ cycle. 

If the $\pocom$ cycle passes through no atomics, then each edge of the cycle is either a $\po$ (and hence an $\hb$), or a non-atomic $\com$ edge (and hence an $\hb$ by Lemma~\ref{lem:nonatomic_com_hb}). 
Hence, we have an $\hb$ cycle. 
These are forbidden by \HbCom, so we have a contradiction.
So, we can henceforth assume that the $\pocom$ cycle passes through at least one atomic.
 
Let us divide the cycle at each atomic event, so that each segment of the cycle begins at an atomic, then passes through a chain of zero or more non-atomics, before finishing at an atomic.
Each segment thus takes the form:
\begin{eqnarray}
\mathit{seg} &=& [\SC]\semi\pocom\semi([\neg\SC]\semi\pocom)^*\semi[\SC].
\end{eqnarray}

\begin{lemma}
\label{lem:seg}
$\mathit{seg}\subseteq \hb \cup \co \cup \fr$.
\end{lemma}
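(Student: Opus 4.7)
The plan is to case-split on the length of the segment, exploiting the hypothesis that every intermediate event lying inside the chain is non-SC.

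First I would handle a ``long'' segment, that is, one containing at least one intermediate non-SC event. Along such a chain, every single edge has at least one non-SC endpoint: the first and last edges because the neighbouring interior event is non-SC, the interior edges because both of their endpoints are non-SC. A $\po$-edge is contained in $\hb$ by definition, and a $\com$-edge with at least one non-SC endpoint lies in $\com\setminus\SC^2$, which by Lemma~\ref{lem:nonatomic_com_hb} is contained in $\hb$. Since $\hb$ is transitively closed by the formulation in Lemma~\ref{lem:simpler_hb}, the whole chain is swallowed by $\hb$, and hence by $\hb\cup\co\cup\fr$.

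Second I would handle the ``short'' segment, where the chain degenerates to a single $\pocom$-edge between two SC events. A $\po$-edge is again in $\hb$. The interesting sub-case is a $\com$-edge connecting two SC events: splitting $\com$ into its three components, a $\co$-edge goes into $\co$, a $\fr$-edge goes into $\fr$, and an $\rf$-edge between two SC events is precisely an $\rfsc$-edge, which Lemma~\ref{lem:simpler_hb} places inside $\hb$.

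The argument is essentially mechanical once the case-split is in place. The only real insight is recognising that the short segment is the single situation where we must actually appeal to $\co$ or $\fr$, since an SC-to-SC $\com$-edge is the one kind of communication that Lemma~\ref{lem:nonatomic_com_hb} does not cover; in every other configuration Lemma~\ref{lem:nonatomic_com_hb} together with $\po\subseteq\hb$ and transitivity of $\hb$ do all the work. I do not anticipate any significant obstacle; the main thing to double-check is that the degenerate segment with no intermediate event is correctly treated by the second case (and thus that $\co$ and $\fr$ are genuinely needed on the right-hand side of the inclusion).
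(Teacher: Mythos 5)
Your proposal is correct and follows essentially the same route as the paper: the paper likewise splits into the single-$\pocom$-edge-between-SC-events case (where $\po\cup\rfsc$ lands in $\hb$ and the leftover $\co\cup\fr$ is absorbed by the right-hand side) and the longer-chain case, which it dispatches via Lemma~\ref{lem:nonatomic_com_hb}. Your write-up merely spells out the observation, left implicit in the paper, that in the long case every edge has a non-SC endpoint.
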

\begin{proof}
If the segment is a single $\pocom$ edge between two SC events, then that edge is either in $\po\cup\rfsc$ (and hence in $\hb$) or is in $\co\cup\fr$, as required.
Otherwise, the segment is of the form
\[
\bullet \xrightarrow{\pocom\semi[\neg\SC]}
\bullet \xrightarrow{([\neg\SC]\semi\pocom\semi[\neg\SC])^*}
\bullet \xrightarrow{[\neg\SC]\semi\pocom} \bullet,
\]
and is hence in $\hb$ by Lemma~\ref{lem:nonatomic_com_hb}, as required.
\end{proof}

If there are fewer than two non-$\hb$ segments in our cycle, then we have an immediate violation of \HbCom{}.
So, we can henceforth assume that at least two segments in our cycle are in $(\co\cup\fr)\setminus\hb$.
Moreover, we can assume that these segments are not consecutive, for if they were, the $\co\cup\fr$ edges would collapse together.
We can hence consider our cycle to be built from chains of segments of the following form:
\[
X'\xrightarrow{(\co\cup\fr)\setminus\hb} 
X\xrightarrow{\hb}
Z\xrightarrow{(\co\cup\fr)\setminus\hb} Z'
\]
where $X'$, $X$, $Z$, and $Z'$ are all in $\SC$.
In fact, we can further assume
\begin{equation}
\label{eq:no_hb_chord}
(X',Z)\notin\hb\qquad (X,Z')\notin\hb
\end{equation}
for otherwise the cycle would still collapse into an \HbCom{} violation.

To complete the proof, it suffices to show that $(X,Z)$ is in $\psc^+$, for then the entire cycle becomes a $\psc$ cycle and hence a violation of the \SeqCst{} axiom.
The precise definition of $\psc$ is quite involved~\cite{lahav+17}; here we just rely on
\begin{eqnarray}
[\SC]\semi\podloc\semi\hb\semi\podloc\semi[\SC] &\subseteq& \psc \\{}
[\SC]\semi\pocom\semi[\SC] &\subseteq& \psc
\end{eqnarray}

Consider the $\hb$ chain from $X$ to $Z$. If it passes through no intermediate non-SC events, then it must be a sequence of $\po$ and $\rfsc$ edges (by Lemma~\ref{lem:simpler_hb}).
All of these edges are in $\psc$, so we have $(X,Z)\in\psc^+$ as required.
So, we can henceforth assume that the $\hb$ chain passes through at least one non-SC event.

Let $Y$ be the last non-SC event in the chain (i.e., the closest to $Z$). The following two lemmas analyse the $(Y,Z)$ and $(X,Y)$ parts of the chain separately.

\begin{lemma}
\label{lem:YZ_psc}
The $\hb$ chain from $Y$ to $Z$ contains a $\podloc$ edge that is followed only by $\rfsc$ and $\polocsc$ edges; i.e.,
\[
(Y,Z) \in \hb^*\semi\podloc\semi (\rfsc \cup \polocsc)^*.
\]
\end{lemma}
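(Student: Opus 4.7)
The plan is to open up the $\hb$ chain from $Y$ to $Z$ using Lemma~\ref{lem:simpler_hb}, which in the absence of non-SC atomics gives $\hb = (\po \cup \rfsc \cup \tsw)^+$, and then to classify which of these three edge types can appear on the portion after $Y$.

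First I would rule out $\tsw$ edges after $Y$. A $\tsw$ edge lies in $\weaklift(\ecom, \stxn)$ and hence has both endpoints inside some $\stxn$-class. The assumption $\stxn = \satxn$ combined with atomic transactions containing no atomic operations forces both endpoints of such an edge to be non-atomic, and therefore non-SC. But every event strictly after $Y$ in the chain is SC by the choice of $Y$, so no $\tsw$ edge can appear after $Y$. Moreover, the very first edge out of $Y$ cannot be $\rfsc$ either (since that would require $Y$ to be SC), so it must be a $\po$ edge; the rest of the $Y$-to-$Z$ portion is then composed solely of $\po$ and $\rfsc$ edges.

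Next I would take the last $\podloc$ edge in this chain, call its endpoints $A$ and $B$, and observe that the tail from $B$ to $Z$ is $\podloc$-free and hence lies in $(\poloc \cup \rfsc)^*$. Since every event from $B$ onwards is SC (again by the choice of $Y$), each $\poloc$ edge in the tail upgrades to $\polocsc$. Together with the $\hb^*$ prefix from $Y$ to $A$, this yields the desired decomposition $(Y, Z) \in \hb^* \semi \podloc \semi (\rfsc \cup \polocsc)^*$.

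The main obstacle is to show that at least one $\podloc$ edge exists in the $Y$-to-$Z$ chain. If every $\po$ edge there were $\poloc$, the entire chain, $Y$ and $Z$ included, would touch a single location. I would discharge this degenerate case by appealing to the surrounding cycle: the external $(\co \cup \fr) \setminus \hb$ edges adjacent to $X$ and $Z$ are same-location, so a one-location $Y$-to-$Z$ fragment, combined with \eqref{eq:no_hb_chord} and the parallel analysis of the $X$-to-$Y$ fragment, should collapse the enclosing picture into a cycle forbidden by either \HbCom{} or \Coherence{}. This case analysis is the step I expect to require the most care, since it rests on the shape of the whole $\pocom$ cycle rather than on the $Y$-to-$Z$ chain in isolation.
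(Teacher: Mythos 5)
Your decomposition of the $Y$-to-$Z$ chain matches the paper's: both open $\hb$ with Lemma~\ref{lem:simpler_hb}, exclude $\tsw$ edges after $Y$ because their endpoints lie inside (atomic) transactions and are therefore non-atomic and hence non-SC, and upgrade the trailing $\poloc$ edges to $\polocsc$ because every event strictly after $Y$ is SC. The genuine gap is exactly the step you flag as the main obstacle: showing that a $\podloc$ edge exists at all. Your proposed discharge of the degenerate case -- collapsing the surrounding cycle into something forbidden by \HbCom{} or \Coherence{} -- does not go through. There is no \Coherence{} axiom in the C++ model of Fig.~\ref{fig:axioms_cpp}, and \HbCom{} gives you nothing here because the remainder of the $\pocom$ cycle beyond $Z'$ need not stay on one location nor lie in $\com^*$, so no forbidden cycle materialises from the one-location fragment alone.

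The missing idea is that the degenerate case is killed by \NoRace{}, not by an acyclicity axiom. If every $\po$ edge between $Y$ and $Z$ is $\poloc$, then the entire chain from $Y$ through $Z$ and onward to $Z'$ (via $(\co\cup\fr)\setminus\hb$) stays within one $\sloc$-class, so the non-SC -- hence, by $\Ato=\SC$, non-atomic -- event $Y$ conflicts with the write $Z'$. They cannot be $\hb$-related: $(Y,Z')\in\hb$ together with $(X,Y)\in\hb$ would contradict assumption~\eqref{eq:no_hb_chord}, while $(Z',Y)\in\hb$ composed with $(Y,Z)\in\hb$ and $(Z,Z')\in\com$ would violate \HbCom{}. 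This yields a data race, contradicting \NoRace{}. The paper runs precisely this argument, with slightly different bookkeeping: it strips the maximal suffix of $\rfsc\cup\polocsc$ edges and races $Z'$ against the non-SC source $A$ of the last remaining $\poloc$ edge. The two choices are interchangeable, but the appeal to \NoRace{} is essential and is absent from your sketch.
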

\begin{proof}
Consider the final part of the $\hb$ chain.
By Lemma~\ref{lem:simpler_hb}, each edge in the chain is a $\tsw$, a $\po$, or an $\rfsc$.
Any suffix of $\rfsc$ or $\polocsc$ edges can be removed, with the remaining part of the chain still ending at an SC event.
The last of the remaining edges cannot be a $\tsw$ because we are at an SC event, and atomics are forbidden inside atomic transactions.
Moreover, if it is a $\podloc$ edge then we are done.
Hence, it remains only to consider the possibility that it is a $\poloc$ edge that begins at a non-SC event.
This would mean we have
\[
Y\xrightarrow{\hb^*}
A\xrightarrow{\poloc\semi (\rfsc \cup \polocsc)^*}
Z\xrightarrow{(\co\cup\fr)\setminus\hb} Z'
\]
for some non-SC event $A$.
Note that $A$ and $Z'$ cannot be related by $\hb$, because $(Z',A)\in\hb$ would be a \HbCom{} violation, and $(A,Z')\in\hb$ would imply $(X,Z')\in\hb$, in contradiction of assumption~\eqref{eq:no_hb_chord}.
Since $Z'$ is a write, it forms a data race with $A$.
This contradicts our \NoRace{} assumption, and hence the proof is complete.
\end{proof}

\begin{lemma}
\label{lem:XY_psc}
The $\hb$ chain from $X$ to $Y$ contains a $\podloc$ edge that is preceded only by $\rfsc$ and $\polocsc$ edges; i.e.,
\[
(X,Y) \in (\rfsc \cup \polocsc)^*\semi \podloc\semi\hb^*.
\]
\end{lemma}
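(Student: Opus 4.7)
The plan is to mirror the proof of Lemma~\ref{lem:YZ_psc}, but working from the start of the $\hb$ chain rather than the end. First I would consider the initial portion of the chain from $X$ to $Y$ and strip any prefix of $\rfsc$ or $\polocsc$ edges; since these edges relate SC events, the stripped chain still begins at some SC event, call it $B$. By Lemma~\ref{lem:simpler_hb}, every edge is $\po$, $\rfsc$, or $\tsw$; the first remaining edge cannot be $\rfsc$ (it would have been stripped), nor $\tsw$ (since $B$ is SC, and atomic transactions contain no atomic operations), so it must be a $\po$ edge. If this edge is $\podloc$, the required decomposition is immediate.

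Otherwise the edge is a $\poloc$ from $B$ to some event $A'$, and $A'$ must be non-SC, as a $\polocsc$ edge would have been stripped. To derive a contradiction I would seek a data race between $A'$ and $X'$, mirroring the race between $A$ and $Z'$ in Lemma~\ref{lem:YZ_psc}. The $\hb$-incomparability of $A'$ and $X'$ would follow symmetrically: $(A', X') \in \hb$ combined with the chain $X \xrightarrow{\hb^+} A'$ and the $\com$ edge $X' \to X$ would violate \HbCom{}; and $(X', A') \in \hb$ combined with $A' \xrightarrow{\hb^*} Y \xrightarrow{\hb^*} Z$ would give $(X', Z) \in \hb$, contradicting~\eqref{eq:no_hb_chord}. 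The two events also share a location, because the chain from $X$ to $A'$ preserves location and $X'$ shares $X$'s location via the $\com$ edge.

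The main obstacle is establishing that $A'$ and $X'$ actually conflict. Since $X$ is a write (as $\co$ and $\fr$ both end at writes), when $X' \xrightarrow{\co} X$ the event $X'$ is itself a write, so $A'$ and $X'$ conflict and the race follows immediately, contradicting \NoRace{}. The harder case is $X' \xrightarrow{\fr} X$ with $A'$ a read, since both events are then reads at the same location and do not directly conflict. I would expect to handle this subcase by exploiting the structure of $\fr$: introducing the write $W$ that $X'$ reads from (so $W \xrightarrow{\co^+} X$), using \Coherence{} to constrain the write that $A'$ reads from, and then combining \HbCom{} with~\eqref{eq:no_hb_chord} to force $A'$ and $W$ to be $\hb$-incomparable, thereby producing a race between $A'$ and $W$ that contradicts \NoRace{}.
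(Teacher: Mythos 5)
Your first half matches the paper exactly: strip the $(\rfsc\cup\polocsc)$ prefix, rule out $\rfsc$ and $\tsw$ as the first remaining edge, and reduce to a $\poloc$ edge into a non-SC event $A'$. Where you diverge is in how the contradiction is extracted, and this is where there is a genuine gap. The paper does not try to race $A'$ against $X'$; it takes one more step along the chain. Since consecutive $\po$ edges collapse together and $A'$ is non-SC (so it has no outgoing $\rfsc$), the \emph{next} edge out of $A'$ must be a $\tsw$ edge, i.e.\ $A'$ lies in an atomic transaction containing some $B$ with $(B,C)\in\ecom\setminus\stxn$ for a $C$ in another transaction. The definition of $\ecom$ guarantees that at least one of $B,C$ is a \emph{write}, and it is that write which the paper puts in a race with $X'$. (This step is also what establishes $A'\neq Y$, a point your argument never addresses.)

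Your route cannot recover the missing write in the subcase you yourself identify as hard. When $X'$ is $\fr$-before $X$ and $A'$ is a read, there need be no race anywhere in the fragment you are examining: an SC read $X'$ of $x$ in one thread, an SC write $X$ of $x$ in another, and a non-atomic read $A'$ of $x$ placed $\po$-after $X$ in the writer's own thread is a perfectly race-free, consistent configuration ($X'$ and $X$ are both atomic, and $X'$ and $A'$ do not conflict). Consequently no amount of massaging the write $W$ with $(W,\nobreak X')\in\rf$ will produce a \NoRace{} violation; in particular $(W,A')\in\hb$ is entirely consistent and yields no contradiction along the lines you sketch. (Note also that \Coherence{} is not an axiom of the C++ model of Fig.~\ref{fig:axioms_cpp}, so you could not invoke it directly.) The contradiction has to come from the continuation of the chain beyond $A'$, as in the paper's proof.
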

\begin{proof}
Consider the initial part of the $\hb$ chain.
By Lemma~\ref{lem:simpler_hb}, each edge in the chain is a $\tsw$, a $\po$, or an $\rfsc$.
Any prefix of $\rfsc$ or $\polocsc$ edges can be removed, with the remaining part of the chain still beginning at an SC event.
The first of the remaining edges cannot be a $\tsw$ because we are at an SC event, and atomics are forbidden inside atomic transactions.
Moreover, if it is a $\podloc$ edge then we are done.
Hence, it remains only to consider the possibility that it is a $\poloc$ edge ending at a non-SC event, say $A$.
The next edge after $A$ cannot be another $\po$ because two consecutive $\po$ edges would collapse together, and it cannot be an $\rfsc$ because $A$ is non-SC, so it must be a $\tsw$ edge.
This implies that there is another non-SC event between $A$ and $Z$, and hence that $A$ is not $Y$.
We therefore have
\begin{center}
\begin{tikzpicture}
\node (Xp) at (0,0) {$X'$};
\node (X) at (2,0) {$X$};
\node (A) at (5.6,0) {$A$};
\node (B) at (5.6,-1) {$B$};
\node (C) at (6.6,-1) {$C$};
\node (D) at (6.6,0) {$D$};
\node (Y) at (7.6,0) {$Y$};
\draw[->, line width=0.6pt] (Xp) to [auto] node {\scriptsize$(\co\cup\fr)\setminus\hb$} (X);
\draw[->, line width=0.6pt] (X) to [auto] node {\scriptsize$(\rfsc \cup \polocsc)^*\semi\poloc$} (A);
\draw[->, line width=0.6pt] (A) to [auto] node {\scriptsize$\tsw$} (D);
\draw[->, line width=0.6pt] (A) to [auto,swap] node {\scriptsize$\stxn$} (B);
\draw[->, line width=0.6pt] (B) to [auto] node {\scriptsize$\ecom$} (C);
\draw[->, line width=0.6pt] (C) to [auto,swap] node {\scriptsize$\stxn$} (D);
\draw[->, line width=0.6pt] (D) to [auto] node {\scriptsize$\hb^*$} (Y);
\end{tikzpicture}
\end{center}
for some non-SC events $A$, $B$, $C$, and $D$.
Note that $X'$ and $B$ cannot be related by $\hb$, because $(B,X')\in\hb$ would be a \HbCom{} violation, and $(X',B)\in\hb$ would imply $(X',Z)\in\hb$, in contradiction of assumption~\eqref{eq:no_hb_chord}.
This reasoning is also valid with $C$ substituted for $B$.
At least one of $B$ and $C$ must be a write, by the definition of $\ecom$, and hence at least one of them forms a data race with $X'$.
This contradicts our \NoRace{} assumption, and hence the proof is complete.
\end{proof}

By combining Lemmas~\ref{lem:YZ_psc} and~\ref{lem:XY_psc}, we can deduce that $(X,Z)$ is in
\[
(\rfsc \cup \polocsc)^*\semi \podloc\semi\hb^*\semi\podloc\semi (\rfsc \cup \polocsc)^*
\]
and is hence in $\psc^*\semi\psc\semi\psc^*$, as required.

\subsection*{Proof of $\acyclic(\stronglift(\pocom,\stxn))$}

For this proof, we shall rely on one further lemma. 

\begin{lemma} 
\label{lem:stxn_hb}
Happens-before edges can be lifted to relate transactions; i.e.,
\begin{eqnarray*}
\stxn^*\semi (\hb\setminus\stxn)\semi\stxn^* &\subseteq& \hb\setminus\stxn
\end{eqnarray*}
\end{lemma}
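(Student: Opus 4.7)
The plan is to unfold the hypothesis $a \stxn^* a' \xrightarrow{\hb \setminus \stxn} b' \stxn^* b$ and prove the two conjuncts $(a,b) \notin \stxn$ and $(a,b) \in \hb$ in turn. Throughout I will lean on two facts available in the enclosing proof of Theorem~\ref{thm:tdrf}: since $\stxn = \satxn$, every transaction is atomic and therefore contains no atomic operation (because $\domain(\satxn) \cap \Ato = \emptyset$), so no $\sw$-edge has an endpoint in any transaction; and each $\stxn$-class coincides with a $\po$-contiguous region of a single thread.

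For $(a,b) \notin \stxn$ I would argue by contradiction. Assuming $(a,b) \in \stxn$ and combining with $a \stxn^* a'$ and $b' \stxn^* b$, symmetry and transitivity of $\stxn$ force either $(a',b') \in \stxn$ (contradicting $(a',b') \in \hb \setminus \stxn$) or $a' = b'$ (contradicting irreflexivity of $\hb$, which follows from \HbCom).

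For $(a,b) \in \hb$, the idea is to reroute the given chain $a' = x_0 \to \cdots \to x_n = b'$ so that its endpoints become $a$ and $b$. I will handle the $b$-end first: assuming $b \ne b'$, let $T_2$ be the transaction containing both $b$ and $b'$, and pick the first $x_j$ on the chain lying in $T_2$. The edge $x_{j-1} \to x_j$ cannot be an $\sw$-edge, because $x_j \in T_2$ cannot be atomic, so it is $\po$ or $\tsw$. In the $\po$ case, $\po$-contiguity of $T_2$ places $x_{j-1}$ $\po$-before every event of $T_2$, giving $(x_{j-1}, b) \in \po$. In the $\tsw$ case, I would unfold $\tsw = \stxn \semi (\ecom \setminus \stxn) \semi \stxn$ and reuse the same witness $p \xrightarrow{\ecom} q$, noting that $q \in T_2$ together with $b \in T_2$ yields $q \stxn b$ by reflexivity of $\stxn$ on transactional events, and therefore $(x_{j-1}, b) \in \tsw$. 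Either way the chain splices to end at $b$. A symmetric analysis of the last event of the chain lying in $a'$'s transaction $T_1$, using exit edges rather than entry edges, handles the $a$-end.

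The main obstacle is this boundary reasoning, and in particular recognising that the $\sw$ case is vacuous under the theorem's hypotheses. This is what makes the rerouting work, because $\sw$ does not share the ``choose any event of the transaction'' flexibility enjoyed by $\po$ (via $\po$-contiguity) and by $\tsw$ (via its outer $\stxn$-wrappers). Once the $\sw$ case is eliminated, the $\po$ and $\tsw$ cases are routine chain manipulations.
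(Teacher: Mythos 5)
Your proposal is correct and takes essentially the same approach as the paper: both arguments come down to analysing the $\hb$-generator edge that crosses a transaction boundary, dismissing the $\sw$/$\rfsc$ case because transactional events cannot be atomic, and re-anchoring $\po$ edges via $\po$-contiguity and $\tsw$ edges via their outer $\stxn$-wrappers. The only difference is presentational: the paper factors the claim into the two one-sided absorptions $\stxn\semi(\hb\setminus\stxn)\subseteq\hb\setminus\stxn$ and $(\hb\setminus\stxn)\semi\stxn\subseteq\hb\setminus\stxn$ and isolates the first boundary-crossing edge with a Kleene-algebra identity, whereas you reroute both ends of the chain directly.
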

\begin{proof}
It suffices to prove the following two inequalities:
\begin{eqnarray}
\label{eq:stxn_hb1}
\stxn\semi (\hb\setminus\stxn) &\subseteq& \hb\setminus\stxn \\
\label{eq:stxn_hb2}
(\hb\setminus\stxn)\semi\stxn &\subseteq& \hb\setminus\stxn.
\end{eqnarray}
We shall provide a proof of \eqref{eq:stxn_hb1}; that of \eqref{eq:stxn_hb2} is similar.
We begin by noting that the following identity of Kleene algebra
\begin{eqnarray*}
(r\cup s)^+ &=& r^+ \cup (r^*\semi s\semi (r\cup s)^*)
\end{eqnarray*}
can be combined with Lemma~\ref{lem:simpler_hb} to give
\begin{eqnarray*}
\hb\setminus\stxn &\subseteq& (\po \cap \stxn)^*\semi (\po\setminus\stxn \cup \rfsc \cup \tsw)\semi \hb^*.
\end{eqnarray*}
Hence, to show \eqref{eq:stxn_hb1}, it suffices to suppose a chain of the form
\[
V \xrightarrow{\stxn} W \xrightarrow{(\po \,\cap\, \stxn)^*}
X \xrightarrow{\po\setminus\stxn \,\cup\, \rfsc \,\cup\, \tsw} 
Y \xrightarrow{\hb^*} Z
\]
and deduce that $(V,Z)\in \hb$.
We do so by a case-split on the $(X,Y)$ edge. 
First, the case $(X,Y)\in\rfsc$ is impossible because $X$ is in an atomic transaction and hence cannot be atomic. 
Second, if $(X,Y)$ is in $\po\setminus\stxn$, then so is $(V,Y)$, and the result follows.
Third, if $(X,Y)$ is in $\tsw$, then so is $(V,Y)$ and the result follows.
\end{proof}

Now, suppose toward a contradiction that there is a cycle in $\stronglift(\pocom,\stxn)$.
Such a cycle is made up of $\stxn$ and $\pocom$ edges, with at least one edge in $\pocom\setminus\stxn$.

Suppose the cycle passes through no atomics.
If the cycle includes no $\stxn$ edges, then we simply have a $\pocom$ cycle, which is a contradiction because we have already proved $\acyclic(\pocom)$. 
If, on the other hand, the cycle includes at least one $\stxn$ edge, then we can divide the cycle at the transactions, so that each segment of the cycle is of the form
\[
\stxn\semi(\pocom\setminus\stxn)^+\semi\stxn
\]
and hence, by Lemma~\ref{lem:nonatomic_com_hb}, of the form $\stxn\semi(\hb\setminus\stxn)^+\semi\stxn$.
Lemma~\ref{lem:stxn_hb} then implies that this cycle is in $\hb$, and is hence forbidden.
So, we can henceforth assume that the cycle passes through at least one atomic.

Let us divide the cycle at each atomic event, so that each segment of the cycle begins at an atomic, then forms a chain of $\pocom$ and $\stxn$ edges through zero or more non-atomics, before finishing at an atomic.
The first and last edge of each segment cannot be an $\stxn$, however, since each segment begins and ends at an atomic.
As a result, Lemma~\ref{lem:seg} still holds under the new definition of a segment, because any $\stxn$ edges in the segment can be folded into the adjacent $\hb$ edges using Lemma~\ref{lem:stxn_hb}.
The rest of the proof is identical to that of $\acyclic(\pocom)$.

\fi

\end{document}